\newif\iffull\fulltrue
\newif\ifdraft\draftfalse
\newif\ifconffinal\conffinalfalse
\newcommand\tuple[1]{\left\langle {#1} \right\rangle}
\newcommand\states{Q}
\newcommand\st{q}
\newcommand\vars{V}
\newcommand\var{x }
\newcommand\flow{F}
\newcommand\inv{\mathit{inv}}
\newcommand\predtrans{\mathcal{F}}
\newcommand\predtranscont{\mathcal{F_C}}
\newcommand\predtranshybrid{\mathcal{F_H}}
\newcommand\SYSTEM{\mathcal{S}}
\newcommand\dtsts{\SYSTEM_D}
\newcommand\hsts{\SYSTEM_H}
\newcommand\set[1]{\left\{{#1}\right\}}
\newcommand\REAL{\mathbb{R}}
\newcommand\todo[1]{\ifdraft\textbf{TODO: {#1}}\else\fi}
\newcommand\ra{\rightarrow}
\newcommand\la{\leftarrow}
\newcommand\Lra{\leadsto}
\newcommand\VALUATIONS{\Sigma}
\newcommand\valuation{\sigma}
\DeclareMathOperator{\COL}{{:}}
\DeclareMathOperator{\DEFEQ}{{:=}}
\newcommand\fml{\varphi}
\newcommand\setfml{\mathbf{Fml}}
\newcommand\trans{\delta}
\newcommand\ODE{\mathcal{D}}
\newcommand\sem[1]{\left\llbracket{#1}\right\rrbracket}
\newcommand\INITIALIZE{\textsc{Initialize}}
\newcommand\VALID{\textsc{Valid}}
\newcommand\UNFOLD{\textsc{Unfold}}
\newcommand\INDUCTION{\textsc{Induction}}
\newcommand\CANDIDATE{\textsc{Candidate}}
\newcommand\DECIDE{\textsc{Decide}}
\newcommand\MODEL{\textsc{Model}}
\newcommand\CONFLICT{\textsc{Conflict}}
\newcommand\INDUCTIONCONT{\textsc{PropagateCont}}
\newcommand\CANDIDATECONT{\textsc{CandidateCont}}
\newcommand\DECIDECONT{\textsc{DecideCont}}
\newcommand\CONFLICTCONT{\textsc{ConflictCont}}
\newcommand\RESVALID{\mathbf{Valid}}
\DeclareMathOperator\RESMODEL{\mathbf{Model}}
\newcommand\RESCONT{\mathbf{Cont}}
\newcommand\PDRState[2]{#1\> || \> #2}
\newcommand\cetrace{M}
\newcommand\abstraction{A}
\newcommand\rel{R}
\newcommand\cmd{\varphi_c}
\newcommand\SKIP{\mathbf{skip}}
\DeclareMathOperator\IF{\mathbf{if}}
\DeclareMathOperator\THEN{\mathbf{then}}
\DeclareMathOperator\ELSE{\mathbf{else}}
\DeclareMathOperator\AND{\mathbf{and}}
\newcommand\TRUE{\mathit{true}}
\newcommand\FALSE{\mathit{false}}
\newcommand\HYBRIDPDR{\textsc{HGPDR}}
\newcommand\DETHYBRIDPDR{\textsc{DetHybridPDR}}
\newcommand\CONTIREACH[2]{\operatorname{\rightarrow}_{{#1},{#2}}}
\newcommand\CONTIREACHPRED[2]{\langle{#1} \mid {#2}\rangle}
\newcommand\rem{\mathit{rem}}
\newcommand\RED[1]{\ra_{#1}}
\newcommand\CONSISTENT{\mathbf{Con}}
\newcommand\CONSISTENTH{\mathbf{Con}_H}
\newcommand\DL{d\mathcal{L}}
\newcommand\QUERYSATAND{\mathit{querySat}}
\newcommand\QUERYSATANDCONT{\mathit{querySat}_{C}}
\newcommand\SAT{\mathit{Sat}}
\newcommand\UNSAT{\mathit{Unsat}}
\newcommand\FORMULAS{\mathit{Formulas}}
\newcommand\OTHERWISE{\mathit{Otherwise}}
\newcommand\SPACEEX{\textsc{SpaceEx}}
\newcommand\DT{\mathit{dt}}
\begin{document}
%
\title{Generalized Property-Directed Reachability for Hybrid Systems}
\titlerunning{GPDR for hybrid systems}
%
\author{Kohei Suenaga\inst{1,2}\orcidID{0000-0002-7466-8789} \and Takuya Ishizawa\inst{1}}
%
%
\institute{Kyoto University, Kyoto, Japan \and
JST PRESTO, Tokyo, Japan}
\maketitle              

\begin{abstract}
  \emph{Generalized property-directed reachability} (GPDR) belongs to the family of the model-checking techniques called IC3/PDR.
  It has been successfully applied to software verification; for example, it is the core of Spacer, a state-of-the-art Horn-clause solver bundled with Z3.
  However, it has yet to be applied to hybrid systems, which involve a continuous evolution of values over time.
  As the first step towards GPDR-based model checking for hybrid systems, this paper formalizes $\HYBRIDPDR$, an adaptation of GPDR to hybrid systems, and proves its soundness.
  We also implemented a semi-automated proof-of-concept verifier, which allows a user to provide hints to guide verification steps.
  
  \keywords{hybrid systems \and property-directed reachability \and IC3 \and model checking \and verification}
\end{abstract}

\ifdraft
\begin{itemize}
\item Submissions are restricted to 20 pages in Springer’s LNCS format, not counting references. Additional material may be placed in an appendix, to be read at the discretion of the reviewers and to be omitted in the final version. Formatting style files and further guidelines for formatting can be found at the Springer website.
\item Deadline: October 1 (at any time zone.)

\item Revise RemoveTrace and imperativeProc.tex
\end{itemize}
\else
\fi

\section{Introduction}


A \emph{hybrid system} is a dynamical system that exhibits both continuous-time dynamics (called a \emph{flow}) and discrete-time dynamics (called a \emph{jump}).
This combination of flows and jumps is an essential feature of \emph{cyber-physical systems (CPS)}, a physical system governed by software.
In the modern world where safety-critical CPS are prevalent, their correctness is an important issue.

\emph{Model checking}~\cite{Clarke:2000:MC:332656,DBLP:journals/sttt/HenzingerHW97} is an approach to guaranteeing hybrid system safety.
It tries to prove that a given hybrid system does not violate a specification by abstracting its behavior and by exhaustively checking that the abstracted model conforms to the specification.

In the area of software model checking, an algorithm called \emph{property-directed reachability (PDR)}, also known as \emph{IC3}, is attracting interest~\cite{DBLP:conf/cav/BirgmeierBW14,DBLP:conf/tacas/CimattiGMT14,DBLP:conf/vmcai/Bradley11}.
IC3/PDR was initially proposed in the area of hardware verification; it was then transferred to software model checking by Cimatti et al.~\cite{DBLP:conf/cav/CimattiG12}.
Its effectiveness for software model checking is now widely appreciated.
For example, the SMT solver Z3~\cite{DBLP:conf/tacas/MouraB08} comes with a Horn-clause solver Spacer~\cite{DBLP:conf/cav/HoderBM11} that uses PDR internally; Horn-clause solving is one of the cutting-edge techniques to verify functional programs~\cite{DBLP:conf/birthday/BjornerGMR15,DBLP:conf/tacas/ChampionC0S18,DBLP:conf/sas/HashimotoU15} and programs with loops~\cite{DBLP:conf/birthday/BjornerGMR15}.

We propose a model checking method for hybrid automata~\cite{Alur:1993:HAA:646874.709849} based on the idea of PDR; the application of PDR to hybrid automata is less investigated compared to its application to software systems.
Concretely, we propose an adaptation of a variant of PDR called \emph{generalized property-directed reachability (GPDR)} proposed by Hoder and Bj{\o}rner~\cite{DBLP:conf/sat/HoderB12}.
%
Unlike the original PDR, which is specialized to jump-only automata-based systems, GPDR is parametrized over a map over predicates on states (i.e., a \emph{forward predicate transformer}); the detail of the underlying dynamic semantics of a verified system is encapsulated into the forward predicate transformer.
This generality of GDPR enables the application of PDR to systems outside the scope of the original PDR by itself; for example, Hoder et al.~\cite{DBLP:conf/sat/HoderB12} show how to apply GPDR to programs with recursive function calls.

An obvious challenge in an adaptation of GPDR to hybrid automata is how to deal with flow dynamics that do not exist in software systems.
%
%
To this end, we extend the logic on which the forward predicate transformer is defined so that it can express flow dynamics specified by an ordinary differential equation (ODE).
Our extension, inspired by the differential dynamic logic ($\DL$) proposed by Platzer~\cite{DBLP:journals/jar/Platzer08}, is to introduce \emph{continuous reachability predicates (CRP)} of the form $\CONTIREACHPRED{\ODE}{\varphi_I}\varphi$ where $\ODE$ is an ODE and $\varphi_I$ and $\varphi$ are predicates.
This CRP is defined to hold under valuation $\sigma$ if there is a continuous transition from $\sigma$ to certain valuation $\sigma'$ that satisfies the following conditions: (1) the continuous transition is a solution of $\ODE$, (2) the valuation $\sigma'$ makes $\varphi$ true, and (3) $\varphi_I$ is true at every point on the continuous transition.
With this extended logic, we define a forward predicate transformer that faithfully encodes the behavior of a hybrid automaton.
We find that we can naturally extend GPDR to hybrid automata by our predicate transformer.


We formalize our adaptation of GPDR to hybrid automata, which we call $\HYBRIDPDR$.
In the formalization, we define a forward predicate transformer that precisely expresses the behavior of hybrid automata~\cite{Alur:1993:HAA:646874.709849} using $\DL$.
We prove the soundness of $\HYBRIDPDR$.
We also describe our proof-of-concept implementation of $\HYBRIDPDR$ and show how it verifies a simple hybrid automaton with human intervention.

In order to make this paper self-contained, we detail GPDR for discrete-time systems before describing our adaptation to hybrid automata.
After fixing the notations that we use in Section~\ref{sec:preliminary}, we define a discrete-time transition system and hybrid automata in Section~\ref{sec:language}. 
Section~\ref{sec:vanillapdr} then reviews the GPDR procedure.
%
Section~\ref{sec:hybridpdr} presents $\HYBRIDPDR$, our adaptation of GPDR to hybrid automata, and states the soundness of the procedure.
%
%
We describe a proof-of-concept implementation in Section~\ref{sec:implementation}.
After discussing related work in Section~\ref{sec:relatedwork}, we conclude in Section~\ref{sec:conclusion}.

For readability, several definitions and proofs are presented in the appendices.

\section{Preliminary}
\label{sec:preliminary}

We write $\REAL$ for the set of reals. 
We fix a finite set $\vars := \set{\var_1,\dots,\var_N}$ of \emph{variables}.
We often use primed variables $\var'$ and $\var''$.  The prime notation also applies to a set of variables; for example, we write $\vars'$ for $\set{\var_1',\dots,\var_N'}$.
We use metavariable $\vec{x}$ for a finite sequence of variables.
We write $\setfml$ for the set of quantifier-free first-order formulas over $\vars \cup \vars' \cup \vars''$; its elements are ranged over by $\fml$.  
We call elements of the set $\VALUATIONS \DEFEQ (\vars \cup \vars' \cup \vars'') \ra \REAL$ a \emph{valuations}; they are represented by metavariable $\valuation$.
We use the prime notation for valuations.  For example, if $\valuation \in \vars \ra \REAL$, then we write $\valuation'$ for $\set{x_1' \mapsto \valuation(x_1), \dots, x_N' \mapsto \valuation(x_N)}$.
We write $\valuation[\var \mapsto r]$ for the valuation obtained by updating the entry for $\var$ in $\valuation$ with $r$.
We write $\valuation \models \fml$ if $\valuation$ is a model of $\fml$; $\valuation \not\models \fml$ if $\valuation \models \fml$ does not hold; $\models \fml$ if $\valuation \models \fml$ for any $\valuation$; and $\not\models \fml$ if there exists $\valuation$ such that $\valuation \not\models \fml$.
We sometimes identify a valuation $\valuation$ with a logical formula $\bigwedge_{x \in \vars}x = \valuation(x)$.

\section{State-transition systems and verification problem}
\label{sec:language}

%

We review the original GPDR for discrete-time systems~\cite{DBLP:conf/sat/HoderB12} in Section~\ref{sec:vanillapdr} before presenting our adaptation for hybrid systems in Section~\ref{sec:hybridpdr}.
This section defines the models used in these explanations (Section~\ref{sec:dtsts} and \ref{sec:hsts}) and formally states the verification problem that we tackle (Section~\ref{sec:verificationProblem}).

\subsection{Discrete-time state-transition systems (DTSTS)}
\label{sec:dtsts}

We model a discrete-time program by a state-transition system.

\begin{definition}
  \label{def:dtsts}
    A \emph{discrete-time state-transition system (DTSTS)} is a tuple $\tuple{\states,\st_0,\fml_0,\trans}$. 
    We use metavariable $\dtsts$ for DTSTS.
    $\states = \set{\st_0,\st_1,\st_2,\dots}$ is a set of \emph{locations}.  
    $\st_0$ is the initial location.
    $\fml_0$ is the formula that has to be satisfied by the initial valuation.
    $\trans \subseteq \states \times \setfml \times \setfml \times \states$ is the \emph{transition relation}.
    %
    %
    We write $\tuple{\st,\valuation_1} \RED{\trans} \tuple{\st',\valuation_2}$ if $\tuple{\st, \fml, \cmd, \st'} \in \trans$ where $\valuation_1 \models \fml$ and $\valuation_1 \cup \valuation_2' \models \cmd$; we call relation ${\RED{\trans}}$ the \emph{jump transition}.
    A \emph{run} of a DTSTS $\tuple{\states,\st_0,\fml_0,\trans}$ is a finite sequence $\tuple{\st^0,\valuation_0},\tuple{\st^1,\valuation_1},\dots,\tuple{\st^N,\valuation_N}$ where (1) $\st^0 = \st_0$, (2) $\valuation_0 \models \fml_0$, and (3) $\tuple{\st^i,\valuation_i} \RED{\trans} \tuple{\st^{i+1},\valuation_{i+1}}$ for any $i \in [0,N-1]$.
\end{definition}

$\tuple{\st,\varphi,\cmd,\st'} \in \trans$ intuitively means that, if the system is at the location $q$ with valuation $\valuation_1$ and $\valuation_1 \models \varphi$, then the system can make a transition to the location $\st'$ and change its valuation to $\valuation_2'$ such that $\valuation_1 \cup \valuation_2' \models \cmd$.
We call $\varphi$ the \emph{guard} of the transition.
$\cmd$ is a predicate over $\vars \cup \vars'$ that defines the \emph{command} of the transition; it defines how the value of the variables may change in this transition.
The elements of $\vars$ represent the values before the transition whereas those of $\vars'$ represent the values after the transition.
%

  
\iffull
  \begin{figure}[t]
    \begin{center}
      \includegraphics{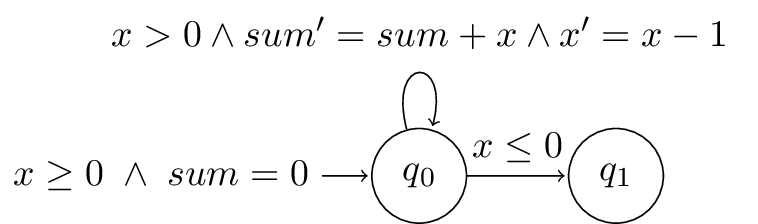}
    \end{center}
  \caption{An example of DTSTS}
  \label{figure:exampleDTSTS}
  \end{figure}
\fi

\iffull
\begin{example}
  \label{ex:dtsts}
  \begin{sloppypar}
  Figure~\ref{figure:exampleDTSTS} is an example of a DTSTS that
  models a program to compute the value of $1 + \dots + x$;
  $\states := \{\st_0,\st_1\}$ and $\fml_0 := x \ge 0 \land sum = 0$.
  In the transition from $\st_0$ to $\st_0$, the guard is $x > 0$;
  the command is $sum' = sum + x \land x' = x - 1$.
  In the transition from $\st_0$ to $\st_1$, the guard is $x \leq 0$;
  the command is $x' = x \land sum' = sum$ because this transition does not change the value of $x$ and $sum$.
  Therefore, the transition relation $\trans = \{\tuple{\st_0, x > 0, sum' = sum + x \land x' = x - 1, \st_0}, \tuple{\st_0, x \leq 0, x' = x \land sum' = sum, \st_1}\}$.
  The finite sequence $\tuple{\st_0,\{x \mapsto 3,sum \mapsto 0\}},\tuple{\st_0, \{x \mapsto 2, sum \mapsto 3\}},\\ \tuple{\st_0, \{x \mapsto 1, sum \mapsto 5\}}, \tuple{\st_0,\{x \mapsto 0, sum \mapsto 6\}},\tuple{\st_1, \{x \mapsto 0, sum \mapsto 6\}}$ is a run of the DTSTS figure~\ref{figure:exampleDTSTS}.
  \end{sloppypar}
\end{example}
\fi

\subsection{Hybrid automaton (HA)}
\label{sec:hsts}

We model a hybrid system by a hybrid automaton (HA)~\cite{Alur:1993:HAA:646874.709849}.
We define an HA as an extension of DTSTS as follows.


\begin{definition}
    A \emph{hybrid automaton (HA)} is a tuple $\tuple{\states,\st_0,\fml_0,\flow,\inv,\trans}$.
    The components $\states$, $\st_0$, $\fml_0$, and $\trans$ are the same as Definition~\ref{def:dtsts}.
    We use metavariable $\hsts$ for HA.
    $\flow$ is a map from $\states$ to ODE on $\vars$ that specifies the flow dynamics at each location;
    $\inv$ is a map from $\states$ to $\setfml$ that specifies the \emph{stay condition}\footnote{We use the word "stay condition" instead of the standard terminology "invariant" following Kapur et al.~\cite{DBLP:conf/ftrtft/KapurHMP94}} at each state.
\end{definition}

A state of a hybrid automaton is a tuple $\tuple{\st,\valuation}$.
A run of $\tuple{\states,\st_0,\fml_0,\flow,\inv,\trans}$ is a sequence of states $\tuple{\st_0,\valuation_0}\tuple{\st_1,\valuation_1}\dots\tuple{\st_n,\valuation_n}$ where $\valuation_0 \models \fml_0$.
The system is allowed to make a transition from $\tuple{\st_i,\valuation_i}$ to $\tuple{\st_{i+1},\valuation_{i+1}}$ if (1) $\valuation_i$ reaches a valuation $\valuation'$ along with the flow dynamics specified by $\flow(\st_i)$, (2) $\inv(\st_i)$ holds at every point on the flow, and (3) $\tuple{\st_i,\valuation'}$ can jump to $\tuple{\st_{i+1},\valuation_{i+1}}$ under the transition relation $\trans$.
In order to define the set of runs formally, we need to define the continuous-time dynamics that happens within each location.
\begin{definition}
  \begin{sloppypar}
    Let $\ODE$ be an ordinary differential equation (ODE) on $\vars$ and let $x_1(t),\dots,x_n(t)$ be a solution of $\ODE$ where $t$ is the time.
    Let us write $\sigma^{(t)}$ for the valuation $\set{x_1 \mapsto x_1(t), \dots, x_n \mapsto x_n(t)}$.
    We write $\valuation \CONTIREACH{\ODE}{\varphi} \valuation'$ if (1) $\valuation = \valuation^{(0)}$ and (2) there exists $t' \ge t$ such that $\sigma' = \sigma^{(t')}$ and $\valuation^{(t'')} \models \varphi$ for any $t'' \in (0,t']$.
    We call relation ${\rightarrow_{\ODE,\varphi}}$ the \emph{flow transition}.
  \end{sloppypar}
\end{definition}

\begin{wrapfigure}{L}[0pt]{0.3\linewidth}
  \begin{center}
    \vspace{-1cm}
    \scalebox{0.6}{%
      \includegraphics{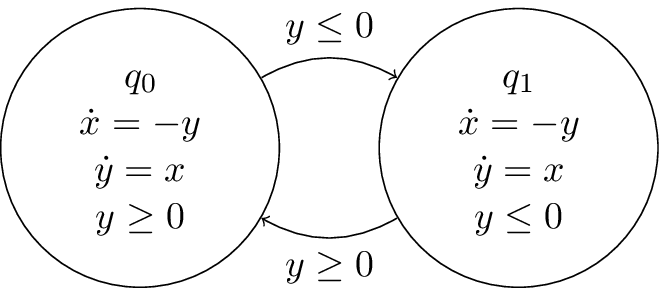}
  }
  \vspace{-1cm}
  \end{center}
  \caption{An example of a hybrid automaton.}
  \label{fig:exampleHA}
  \vspace{-0.9cm}
\end{wrapfigure}

Intuitively, the relation $\valuation \CONTIREACH{\ODE}{\varphi} \valuation'$ means that there is a trajectory from the state represented by $\valuation$ to that represented by $\valuation'$ such that (1) the trajectory is a solution of $\ODE$ and (2) $\varphi$ holds at any point on the trajectory.
For example, let $\ODE$ be $\dot{x} = v, \dot{v} = 1$, where $x$ and $v$ are time-dependent variables; $\dot{x}$ and $\dot{v}$ are their time derivative.
The solution of $\ODE$ is $v = t + v_0$ and $x = \frac{t^2}{2} + v_0 t + x_0$ where $t$ is the elapsed time, $x_0$ is the initial value of $x$, and $v_0$ is the initial value of $v$.
Therefore, $\set{x \mapsto 0, v \mapsto 0} \CONTIREACH{\ODE}{\TRUE} \set{x \mapsto \frac{1}{2}, v \mapsto 1}$ holds because $(x,v) = (\frac{1}{2}, 1)$ is the state at $t = 1$ on the above solution with $x_0 = 0$ and $v_0 = 0$.
$\set{x \mapsto 0, v \mapsto 0} \CONTIREACH{\ODE}{x \ge 0} \set{x \mapsto \frac{1}{2}, v \mapsto 1}$ also holds because the condition $x \ge 0$ continues to hold along with the trajectory from $(x,v) = (0,0)$ to $(\frac{1}{2},1)$.
However, $\set{x \mapsto 0, v \mapsto 0} \CONTIREACH{\ODE}{x \ge \frac{1}{4}} \set{x \mapsto \frac{1}{2}, v \mapsto 1}$ does \emph{not} hold because the condition $x \ge \frac{1}{4}$ does not hold for the initial $\frac{1}{\sqrt{2}}$ seconds in this trajectory.

Using this relation, we can define a run of an HA as follows.
\begin{definition}
  \label{def:harun}
  A finite sequence $\tuple{\st^0,\valuation_0},\tuple{\st^1,\valuation_1},\dots,\tuple{\st^N,\valuation_N}$ is called a \emph{run} of an HA $\tuple{\states,\st_0,\fml_0,\flow,\inv,\trans}$ if (1) $\st^0 = \st_0$, (2) $\valuation_0 \models \fml_0$, (3) for any $i$, if $0 \le i \le N-2$, there exists $\tuple{\st^i, \fml_i, \cmd^i, \st^{i+1}} \in \trans$ and $\valuation^I$ such that $\valuation_i \CONTIREACH{\flow(q^i)}{\inv(q^i)} \valuation^I$ and $\valuation^I \models \fml_i$ and $\tuple{\st^i,\valuation^I} \RED{\trans} \tuple{\st^{i+1},\valuation_{i+1}}$, and (4) $\valuation_{N-1} \CONTIREACH{\flow(q^{N-1})}{\inv(\st^{N-1})} \valuation_{N}$.
\end{definition}

\begin{remark}
\label{remark:lasttransofha}
This definition is more complicated than that of runs of DTSTS because we need to treat the last transition from $\tuple{\st^{N-1},\valuation_{N-1}}$ to $\tuple{\st^{N},\valuation_{N}}$ differently than the other transitions.
Each transition from $\tuple{\st^i,\valuation_i}$ to $\tuple{\st^{i+1},\valuation_{i+1}}$, if $0 \le i \le N-2$, is a flow transition followed by a jump transition; however, the last transition consists only of a flow transition.
\end{remark}

\begin{example}
  Figure~\ref{fig:exampleHA} shows a hybrid automaton with $\states := \set{\st_0,\st_1}$ schematically.
  Each circle represents a location $\st$; we write $\flow(\st)$ for the ODE associated with each circle.
  Each edge between circles represents a transition; we present the guard of the transition on each edge.
  We omit the $\varphi_c$ part; it is assumed to be the do-nothing command represented by $\land_{x \in \vars}x' = x$.

  Both locations are equipped with the same flow that is the anticlockwise circle around the point $(x,y)=(0,0)$ on the $xy$ plane.
  The system can stay at $\st_0$ as long as $y \ge 0$ and at $\st_1$ as long as $y \le 0$.
  $y=0$ holds whenever a transition is invoked.
  Indeed, for example, $\inv(\st_0) = y \ge 0$ and the guard from $\st_0$ to $\st_1$ is $y \le 0$; therefore, when the transition is invoked, $\inv(\st_0) \land y \le 0$ holds, which is equivalent to $y=0$.

  Starting from the valuation $\valuation_0 := \set{x \mapsto 1, y \mapsto 0}$ at location $\st_0$, the system reaches $\valuation_1 := \set{x \mapsto -1, y \mapsto 0}$ by the flow $\flow(\st_0)$ along which $\inv(\st_0) \equiv y \ge 0$ continues to hold; then the transition from $\st_0$ to $\st_1$ is invoked.
  After that, the system reaches $\valuation_2 := \set{x \mapsto 0, y \mapsto -1}$ by $\flow(\st_1)$.
  Therefore, $\tuple{\st_0,\valuation_0}\tuple{\st_1,\valuation_1}\tuple{\st_1,\valuation_2}$ is a run of this HA.
\end{example}


\subsection{Safety verification problem}
\label{sec:verificationProblem}

\begin{definition}
  We say that $\valuation$ is \emph{reachable} in DTSTS $\dtsts$ (resp., HA $\hsts$) if there is a run of $\dtsts$ (resp., $\hsts$) that reaches $\tuple{\st,\valuation}$ for some $\st$.
  A \emph{safety verification problem (SVP)} for a DTSTS $\tuple{\dtsts,\varphi}$ (resp., HA $\tuple{\hsts,\varphi}$) is the problem to decide whether $\valuation' \models \varphi$ holds for all the reachable valuation $\valuation'$ of the given $\dtsts$ (resp., $\hsts$).
\end{definition}

If an SVP is affirmatively solved, then the system is said to be \emph{safe}; otherwise, the system is said to be \emph{unsafe}.
One of the major strategies for proving the safety of a system is discovering its \emph{inductive invariant}.
\begin{definition}
  \label{def:inductiveInvariants}
  \begin{itemize}
  \item
    \begin{sloppypar}
    Let $\tuple{\dtsts,\varphi_P}$ be an SVP for DTSTS where $\dtsts = \tuple{\states,\st_0,\fml_0,\trans}$.
    Then, a function $\rel \COL \states \ra \setfml$ is called an inductive invariant if (1) $\models \fml_0 \implies \rel(\st_0)$; (2) if $\valuation \models \rel(\st)$ and $\tuple{\st,\valuation} \RED{\trans} \tuple{\st',\valuation'}$, then $\valuation' \models \rel(\st')$; and (3) $\models \rel(\st) \implies \varphi_P$ for any $\st$.
    \end{sloppypar}
  \item
    Let $\tuple{\hsts,\varphi_P}$ be an SVP for HA where $\hsts = \tuple{\states,\st_0,\fml_0,\flow,\inv,\trans}$.
    Then, a function $\rel \COL \states \ra \setfml$ is called an inductive invariant if (1) $\models \fml_0 \implies \rel(\st_0)$; (2) if $\valuation \models \rel(\st)$ and $\tuple{\st,\valuation} \CONTIREACH{\flow(\st)}{\inv(\st)} \tuple{\st'',\valuation''}$ and $\tuple{\st'',\valuation''} \RED{\trans} \tuple{\st',\valuation'}$, then $\valuation' \models \rel(\st')$; and (3) $\models \rel(\st) \implies \varphi_P$ for any $\st$.
  \end{itemize}
\end{definition}



Unsafety can be proved by discovering a \emph{counterexample}.
\begin{definition}
  Define $\dtsts$, $\varphi_P$, and $\hsts$ as in Definition~\ref{def:inductiveInvariants}.
  A run $\tuple{\valuation_0,\st_0}\dots\tuple{\valuation_N,\st_N}$ of $\dtsts$ (resp. $\hsts$) is called a counterexample to the SVP $\tuple{\dtsts,\varphi_P}$ (resp. $\tuple{\hsts,\varphi_P}$) if $\valuation_N \models \neg\varphi_P$.
\end{definition}

GPDR is a procedure that tries to find an inductive invariant or a counterexample to a given SVP.
SVP is in general undecidable.
Therefore, the original GPDR approach~\cite{DBLP:conf/sat/HoderB12} and our extension with hybrid systems presented in Section~\ref{sec:hybridpdr} do not terminate for every input.

\section{GPDR for DTSTS}
\label{sec:vanillapdr}
Before presenting our extension of GPDR with hybrid systems, we present the original GPDR procedure by Hoder and Bj{\o}rner~\cite{DBLP:conf/sat/HoderB12} in this section.
(The GPDR presented here, however, is slightly modified from the original one; see Remark~\ref{remark:localation-specific-gpdr}.)
%
%
%
%

Given a safety verification problem $\tuple{\dtsts,\fml_P}$ where $\dtsts = \tuple{\states,\st_0,\fml_0,\trans}$, GPDR tries to find (1) an inductive invariant to prove the safety of $\dtsts$, or (2) a counterexample to refute the safety.
To this end, GPDR (nondeterministically) manipulates a data structure called \emph{configurations}.
A configuration is either $\RESVALID$, $\RESMODEL \cetrace$, or an expression of the form $\PDRState{\cetrace}{R_0,\dots,R_N; N}$.
We explain each component of the expression $\PDRState{\cetrace}{R_0,\dots,R_N; N}$ in the following.
($\RESVALID$ and $\RESMODEL \cetrace$ are explained later.)
\begin{itemize}
    \item $R_0,\dots,R_N$ is a finite sequence of maps from $\states$ to $\setfml$ (i.e., elements of $\setfml$).
    Each $R_i$ is called a \emph{frame}.
    %
    The frames are updated during an execution of GPDR so that $R_i(\st_j)$ is an overapproximation of the states that are reachable within $i$ steps from the initial state in $\dtsts$ and whose location is $\st_j$.
    \item $N$ is the index of the last frame.
    \item
      $\cetrace$ is a finite sequence of the form $\tuple{\valuation_i,\st_i,i},\tuple{\valuation_i,\st_i,i+1},\dots,\tuple{\valuation_N,\st_N,N}$.
      This sequence is a candidate partial counterexample that starts from the one that is $i$-step reachable from the initial state and that ends up with a state $\tuple{\valuation_N,\st_N}$ such that $\valuation_N \models \neg\varphi_P$.
      Therefore, in order to prove the safety of $\dtsts$, a GPDR procedure needs to prove that $\tuple{\st_i,\valuation_i}$ is unreachable within $i$ steps from an initial state.
\end{itemize}

In order to formalize the above intuition, GPDR uses a \emph{forward predicate transformer} determined by $\dtsts$.
In the following, we fix an SVP $\tuple{\dtsts,\fml_P}$.
\begin{definition}
\label{def:predtransOrig}
$\predtrans(R)(\st')$, where $\predtrans$ is called the \emph{forward predicate transformer} determined by $\dtsts$, is the following formula:
  \[
    \begin{array}{l}
      (\st' = \st_0 \land \varphi_0) \lor \displaystyle\bigvee_{ (\st,\varphi,\cmd,\st') \in \trans} \exists \vec{x''}.
      \left(
      \begin{array}{ll}
        & [\vec{x''}/\vec{x}]R(\st)\\
        \land & [\vec{x''}/\vec{x}]\varphi \land [\vec{x}/\vec{x'},\vec{x''}/\vec{x}]\cmd
      \end{array}
      \right),
    \end{array}
  \]
  where $\vec{x''}$ is the sequence $\var_1'',\dots,\var_N''$.

\end{definition}

Notice that $\predtrans(\lambda \st. \FALSE)$ is equivalent to $\varphi_0$.
Intuitively, $\valuation' \models \predtrans(R)(\st')$ holds if $\tuple{\st',\valuation'}$ is an initial state (i.e., $\st'=\st_0$ and $\valuation' \models \varphi_0$) or $\tuple{\st',\valuation'}$ is reachable in 1-step transition from a state that satisfies $R$.
The latter case is encoded by the second disjunct of the above definition: The valuation $\valuation'$ satisfies the second disjunct if there are $\st,\varphi$, and $\varphi_c$ such that $(\st,\varphi,\varphi_c,\st') \in \trans$ (i.e., $\st'$ is 1-step after $\st$ in $\delta$) and there is a valuation $\valuation$ such that $\valuation \models R(\st) \land \varphi$ (i.e., $\valuation$ satisfies the precondition $R(\st)$ and the guard $\varphi$) and $\valuation'$ is a result of executing command $c$ under $\valuation$.

The following lemma guarantees that $\predtrans$ soundly approximates the transition of an DTSTS.
\begin{lemma}
  \label{lem:predtransProp}
  If $\valuation_1 \models R(\st_1)$ and $\tuple{\st_1,\valuation_1} \RED{\trans} \tuple{\st_2,\valuation_2}$, then $\valuation_2 \models \predtrans(R)(\st_2)$.
\end{lemma}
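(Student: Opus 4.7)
The plan is to unfold both hypotheses and the definition of $\predtrans$, and then exhibit an explicit witness for the existential quantifier in the second disjunct of $\predtrans(R)(\st_2)$.

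First I would unpack the jump transition: by the definition of $\RED{\trans}$, $\tuple{\st_1,\valuation_1} \RED{\trans} \tuple{\st_2,\valuation_2}$ means that there exists a transition $\tuple{\st_1,\varphi,\cmd,\st_2} \in \trans$ with $\valuation_1 \models \varphi$ and $\valuation_1 \cup \valuation_2' \models \cmd$. This transition picks out one particular disjunct in the big disjunction
\[
  \bigvee_{(\st,\varphi,\cmd,\st_2) \in \trans} \exists \vec{x''}.\;\bigl([\vec{x''}/\vec{x}]R(\st) \,\land\, [\vec{x''}/\vec{x}]\varphi \,\land\, [\vec{x}/\vec{x'},\vec{x''}/\vec{x}]\cmd\bigr),
\]
so it suffices to satisfy the body of that single disjunct under $\valuation_2$.

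Next I would choose the witness $\vec{x''} \mapsto \valuation_1(\vec{x})$ for the existential. With this choice, I would verify the three conjuncts in turn. For $[\vec{x''}/\vec{x}]R(\st_1)$, the substitution reroutes every free occurrence of $\vec{x}$ in $R(\st_1)$ to $\vec{x''}$, so evaluating under $\valuation_2$ together with the chosen witness is equivalent to evaluating $R(\st_1)$ under $\valuation_1$; this holds by the hypothesis $\valuation_1 \models R(\st_1)$. The conjunct $[\vec{x''}/\vec{x}]\varphi$ is handled identically, using $\valuation_1 \models \varphi$. For $[\vec{x}/\vec{x'},\vec{x''}/\vec{x}]\cmd$, the substitution sends the pre-state variables $\vec{x}$ to $\vec{x''}$ (hence to $\valuation_1(\vec{x})$) and sends the post-state variables $\vec{x'}$ to $\vec{x}$ (hence to $\valuation_2(\vec{x})$); this is exactly the assertion $\valuation_1 \cup \valuation_2' \models \cmd$ that we already have.

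No step is really an obstacle; the only care required is the bookkeeping of the double substitution $[\vec{x}/\vec{x'},\vec{x''}/\vec{x}]$ in the command, which I would make explicit in a single substitution lemma (if any reader might be uncertain about the ordering) so that one sees $\vec{x''}$ represents the pre-state, $\vec{x}$ the post-state, and the substitution realigns things to be evaluated under $\valuation_2$ with $\vec{x''}$ bound to $\valuation_1(\vec{x})$. Since the first disjunct $\st_2 = \st_0 \land \varphi_0$ is not needed in this argument, the proof is simply the verification above.
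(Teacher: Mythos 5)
Your proposal is correct and matches the paper's proof: the paper likewise unpacks the jump transition into a specific tuple $\tuple{\st_1,\varphi,\cmd,\st_2} \in \trans$, forms the combined valuation $\valuation_1'' \cup \valuation_2$ (which is exactly your witness $\vec{x''} \mapsto \valuation_1(\vec{x})$ evaluated under $\valuation_2$), and checks the three conjuncts one by one before discharging the existential. No substantive difference in approach.
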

\begin{proof}
  \begin{sloppypar}
  Assume $\valuation_1 \models R(\st_1)$ and $\tuple{\st_1,\valuation_1} \RED{\trans} \tuple{\st_2,\valuation_2}$.
  Then, by definition, $(q_1,\varphi,\varphi',q_2) \in \delta$ and $\sigma_1 \models \varphi$ and $\sigma_1 \cup \sigma_2' \models \varphi_c$ for some $\varphi$ and $\varphi_c$.
  $\sigma_1'' \cup \sigma_2 \models [\vec{x''}/\vec{x}]R(\st_1)$ follows from $\sigma_1 \models R(\st_1)$.
  $\sigma_1'' \cup \sigma_2 \models [\vec{x''}/\vec{x}]\varphi$ follows from $\sigma_1 \models \varphi$.
  $\sigma_1'' \cup \sigma_2 \models [\vec{x}/\vec{x'}, \vec{x''}/\vec{x}]\varphi_c$ follows from $\sigma_1 \cup \sigma_2' \models \varphi_c$.
  Therefore, $\sigma_1'' \cup \sigma_2 \models [\vec{x''}/\vec{x}]R(\st_1) \land [\vec{x''}/\vec{x}]\varphi \land [\vec{x}/\vec{x'}, \vec{x''}/\vec{x}]\varphi_c$.
  Hence, we have $\sigma_2 \models \exists \vec{x''}. [\vec{x''}/\vec{x}]R(q) \land [\vec{x''}/\vec{x}]\varphi \land [\vec{x}/\vec{x'},\vec{x''}/\vec{x}]\varphi_c$ as required.
  \end{sloppypar}
\end{proof}

By using the forward predicate transformer $\predtrans$, we can formalize the intuition about configuration $\PDRState{\cetrace}{R_0,\dots,R_N; N}$ explained so far as follows.
\begin{definition}
  \label{def:originalCon}
Let $\dtsts$ be $\tuple{\states,\st_0,\fml_0,\trans}$, $\predtrans$ be the forward predicate transformer determined by $\dtsts$, and $\varphi_P$ be the safety condition to be verified.
A configuration $C$ is said to be \emph{consistent} if it is (1) of the form $\RESVALID$, (2) of the form $\RESMODEL \tuple{\valuation,\st_0,0}\cetrace$, or (3) of the form $\PDRState{\cetrace}{R_0,\dots,R_N; N}$ that satisfies all of the following conditions:
\begin{itemize}
\item (Con-A) $R_0(\st_0) = \fml_0$ and $R_0(\st_i) = \FALSE$ if $\st_i \ne \st_0$;
\item (Con-B) $\models R_i(\st) \implies R_{i+1}(\st)$ for any $\st$;
\item (Con-C) $\models R_i(\st) \implies \varphi_P$ for any $\st$ and $i < N$;
\item (Con-D) $\models \predtrans(R_i)(\st) \implies R_{i+1}(\st)$ for any $i < N$ and $\st$;
\item (Con-E) if $\tuple{\sigma,\st,N} \in \cetrace$, then $\sigma \models R_N(\st) \land \neg\varphi_P$\footnote{We hereafter write $\tuple{\sigma,\st,i} \in \cetrace$ to express that the element  $\tuple{\sigma,\st,i}$ exists in the sequence $\cetrace$ although $\cetrace$ is a sequence, not a set.}; and
\item (Con-F) if $\tuple{\sigma_1,\st_1,i}, \tuple{\sigma_2,\st_2,i+1} \in \cetrace$ and $i < N$, then $\tuple{\st_1,\varphi,\cmd,\st_2} \in \trans$ and $\sigma_1,\sigma_2' \models R_i(\st_1) \land \varphi \land \varphi_c$.
\end{itemize}
If $C$ is consistent, we write $\CONSISTENT(C)$.
\end{definition}

\begin{figure}[t]
  \[
    \scriptsize
    \begin{array}{lrcl}
      \INITIALIZE & & \Lra & \PDRState{\emptyset}{\tuple{\rel_0 := \predtrans(\lambda \st. \FALSE) ; N := 0}}\\
      &&&\IF \forall \st \in \states. \models \rel_0(\st) \implies \varphi_P\\
      \VALID & \PDRState{\cetrace}{\abstraction} & \Lra & \RESVALID\\
                  &&&\IF \exists i < N. \forall \st \in \states. \models \rel_{i}(\st) \implies \rel_{i-1}(\st)\\
      \UNFOLD & \PDRState{\cetrace}{\abstraction} & \Lra & \PDRState{\emptyset}{A[\rel_{N+1} := \lambda \st. \TRUE; N := N + 1]}\\
                  &&&\IF \forall \st \in \states. \models \rel_N(\st) \implies \varphi_P \\
      \INDUCTION & \PDRState{\cetrace}{\abstraction} & \Lra & \PDRState{\emptyset}{A[\rel_j := \lambda \st. \rel_j(\st) \land \rel(\st)]_{j=1}^{i+1}}\\
                  &&&\begin{array}[t]{l}\IF 
                       \forall \st \in \states. \models \predtrans(\lambda \st. \rel_i(\st) \land \rel(\st))(\st) \implies \rel(\st)\\
                     \end{array}\\
      \CANDIDATE & \PDRState{\emptyset}{\abstraction} & \Lra & \PDRState{\tuple{\valuation,\st,N}}{\abstraction}\\
                  &&& \IF \valuation \models \rel_N(\st) \land \neg \varphi_P \\
      \DECIDE & \PDRState{\tuple{\valuation_2,\st_2,i+1}\cetrace}{\abstraction} & \Lra & \PDRState{\tuple{\valuation_1,\st_1,i}\tuple{\valuation_2,\st_2,i+1}\cetrace}{A}\\
                  &&&
                      \begin{array}[t]{l}
                        \IF
                        \tuple{\st_1,\varphi,\cmd,\st_2} \in \trans
                        \AND
                        \valuation_1,\valuation_2' \models \rel_i(\st_1) \land \varphi \land \cmd\\
                      \end{array}\\
      \MODEL & \PDRState{\tuple{\valuation,\st_0,0}\cetrace}{A} & \Lra & \RESMODEL \tuple{\valuation,\st_0,0}\cetrace\\
      \CONFLICT & \PDRState{\tuple{\valuation',\st',i+1}\cetrace}{A} & \Lra & \PDRState{\emptyset}{A[\rel_j \la \lambda \st. \rel_j(\st) \land \rel(\st)]_{j=1}^{i+1}}\\ 
                  &&& \IF \models \rel(\st') \implies \neg\valuation' \AND \forall \st \in \states. \models \predtrans(R_i)(\st) \implies \rel(\st)\\
    \end{array}
  \]
\caption{The rules for the original PDR.  Recall that $\neg\valuation'$ in the rule $\CONFLICT$ denotes the formula $\displaystyle\neg\left(\bigwedge_{x \in \vars}x = \valuation'(x)\right)$.}
\label{fig:defVanillaPDR}
\end{figure}



The GPDR procedure rewrites a configuration following the (nondeterministic) rewriting rules in Figure~\ref{fig:defVanillaPDR}.
We add a brief explanation below; for more detailed exposition, see~\cite{DBLP:conf/sat/HoderB12}.
Although the order of the applications of the rules in Figure~\ref{fig:defVanillaPDR} is arbitrary, we fix one scenario of the rule applications in the following for explanation.
\begin{enumerate}
\item
  \begin{sloppypar}
    The procedure initializes $\cetrace$ to $\emptyset$, $R_0$ to $\predtrans(\lambda\st.\FALSE)$, and $N$ to $0$ ($\INITIALIZE$).
  \end{sloppypar}
\item\label{code:loopentry} If there are a valuation $\valuation$ and a location $\st$ such that $\valuation \models \rel_N(\st) \land \neg \varphi_P$ ($\CANDIDATE$), then the procedure adds $\tuple{\valuation,\st,N}$ to $\cetrace$.
  The condition $\valuation \models \rel_N(\st) \land \neg \varphi_P$ guarantees that the state $\tuple{\st,\valuation}$ violates the safety condition $\varphi_P$; therefore, the candidate $\tuple{\valuation,\st,N}$ needs to be refuted.
  If not, then the frame sequence is extended by setting $N$ to $N+1$ and $\rel_{N+1}$ to $\lambda \st. \TRUE$ ($\UNFOLD$); this is allowed since $\forall \st \in \states. \models \rel_N(\st) \implies \varphi_P$ in this case.
\item
  The discovered $\tuple{\st,\valuation}$ is backpropagated by successive applications of $\DECIDE$: In each application of $\DECIDE$, for $\tuple{\st_2,\valuation_2,i+1}$ in $\cetrace$, the procedure tries to find $\valuation$ and $\st$ such that $\tuple{\st_1,\varphi,\cmd,\st_2} \in \trans$ and $\valuation_1,\valuation_2' \models \rel_i(\st_1) \land \varphi \land \cmd$ where $\valuation_2'$ is the valuation obtained by replacing the domain of $\valuation_2$ with their primed counterpart.
  These conditions in combination guarantee $\tuple{\st_1,\valuation_1} \RED{\trans} \tuple{\st_2,\valuation_2}$ and $\valuation_1 \models R_i(\st_1)$.
  \begin{enumerate}
  \item
    If this backpropagation reaches $R_0$ (the rule $\MODEL$), then it reports the trace of the backpropagation returning $\RESMODEL\ \cetrace$.
  \item
    If it does not reach $R_0$, in which case there exists $i$ such that $\valuation' \land \predtrans(R_i)(\st')$ is not satisfiable, then we pick a frame $R$ such that $\models R(\st') \implies \neg\valuation'$ and $\models \predtrans(R_i)(\st) \implies R(\st)$ for any $\st$ (the rule $\CONFLICT$).
    Intuitively, $R$ is a frame that separates (1) the union of the initial states denoted by $\varphi_0$ and the states that are one-step reachable from a state denoted by $R_i(\st')$ and (2) the state denoted by $\tuple{\st',\valuation'}$.
    In a GPDR term, $R$ is a \emph{generalization} of $\neg\valuation'$.
    This formula is used to strengthen $\rel_j$ for $j \in \set{1,\dots,i+1}$.
  \end{enumerate}
\item
  The frame $R$ obtained in the application of the rule $\CONFLICT$ is propagated forward by applying the rule $\INDUCTION$.
  The condition $\forall \st \in \states. \models \predtrans(\lambda\st.R_i(\st) \land \rel(\st))(\st) \implies \rel(\st)$ forces that $\rel$ holds in the one-step transition from a states that satisfies $R_i$.
  If this condition holds, then $\rel$ holds for $i+1$ steps (Theorem~\ref{lem:invariant}); therefore, we conjoin $\rel$ to $R_1(\st), \dots, R_{i+1}(\st)$.
  In order to maintain the consistency conditions (Con-E) and (Con-F), this rule clears $\cetrace$ to the empty set to keep its consistency to the updated frames.\footnote{We could filter $\cetrace$ so that it is consistent for the updated frame.  We instead discard $\cetrace$ here for simplicity.}
\item
  If $\forall \st \in \states. \models \rel_i(\st) \implies \rel_{i-1}(\st)$ for some $i < N$, then the verification succeeds and $\rel_i$ is an inductive invariant ($\VALID$).
  If such $i$ does not exist, then we go back to Step~\ref{code:loopentry}.
\end{enumerate}

\begin{remark}
  \label{remark:localation-specific-gpdr}
  One of the differences of the above GPDR from the original one~\cite{DBLP:conf/sat/HoderB12} is that ours deals with the locations of a given DTSTS explicitly.
  In the original GPDR, information about locations are assumed to be encoded using a variable that represents the program counter.
  Although such extension was proposed for IC3 by Lange et al.~\cite{DBLP:conf/fmcad/0001NN15}, we are not aware of a variant of GPDR that treats locations explicitely.
\end{remark}

\paragraph{Soundness. }
We fix one DTSTS $\tuple{\states,\st_0,\fml_0,\trans}$ in this section.
The correctness of the GPDR procedure relies on the following lemmas.
\begin{lemma}
  \label{lem:invariant}
$\CONSISTENT$ is invariant to any rule application of Figure~\ref{fig:defVanillaPDR}.
\end{lemma}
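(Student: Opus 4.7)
The plan is a rule-by-rule case analysis on Figure~\ref{fig:defVanillaPDR}: assuming the premise configuration is consistent, I verify that the conclusion is again consistent. Two rules, $\VALID$ and $\MODEL$, produce $\RESVALID$ or $\RESMODEL\cetrace$, which satisfy clauses~(1) or~(2) of Definition~\ref{def:originalCon} directly (the $\MODEL$ case inherits the required $\tuple{\valuation,\st_0,0}$-prefix from the rule's pattern). The remaining six rules yield configurations of the form $\PDRState{\cetrace}{R_0,\dots,R_N;N}$, and for each I check (Con-A)--(Con-F).

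The routine cases are $\INITIALIZE$, $\UNFOLD$, $\CANDIDATE$, and $\DECIDE$. For $\INITIALIZE$, unfolding Definition~\ref{def:predtransOrig} shows that $\predtrans(\lambda\st.\FALSE)(\st_0) \equiv \varphi_0$ and $\predtrans(\lambda\st.\FALSE)(\st) \equiv \FALSE$ for $\st \neq \st_0$, giving (Con-A); the remaining clauses are vacuous because $N = 0$ and $\cetrace = \emptyset$. For $\UNFOLD$, the old frames are untouched and the new top frame $\lambda\st.\TRUE$ makes the extended (Con-B) and (Con-D) trivial, while the rule guard is precisely the new (Con-C) at $j = N$. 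For $\CANDIDATE$, only the freshly added triple affects (Con-E), and the rule guard is exactly that condition. For $\DECIDE$, only the newly adjacent pair in $\cetrace$ affects (Con-F), and again the guard supplies the required conjunction.

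The substantive cases are $\INDUCTION$ and $\CONFLICT$: both strengthen $R_1,\dots,R_{i+1}$ by conjoining a formula $R$ and reset $\cetrace$ to $\emptyset$ (so (Con-E), (Con-F) become vacuous). The single auxiliary fact I lean on is \emph{monotonicity} of the forward predicate transformer: if $\models R(\st) \implies R'(\st)$ for every $\st$ then $\models \predtrans(R)(\st) \implies \predtrans(R')(\st)$ for every $\st$, which is immediate from Definition~\ref{def:predtransOrig} since $R$ occurs only positively. With monotonicity, (Con-C) is preserved because strengthening a frame preserves any implication it entails, (Con-A) is preserved because neither rule touches $R_0$, and for indices $1 \le j \le i+1$ the new (Con-B) and (Con-D) reduce via the old consistency conditions and monotonicity (using $R_j \implies R_i$ from iterated (Con-B)) to the implication $\predtrans(R_i \land R)(\st) \implies R(\st)$ for $\INDUCTION$, or $\predtrans(R_i)(\st) \implies R(\st)$ for $\CONFLICT$ — both of which are exactly the respective rule guards. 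The upper index $j = i+1$ of (Con-B), (Con-D) uses monotonicity to drop the newly conjoined $R$ before invoking the old condition against the unchanged frame $R_{i+2}$.

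The subtlety I expect to be the main obstacle is the boundary instance of (Con-B) and (Con-D) at $j = 0$, where $R_0$ is left alone while $R_1$ is strengthened to $R_1 \land R$. What must be shown is $\models R_0(\st) \implies R(\st)$ for every $\st$. For $\st \neq \st_0$ this is vacuous since (Con-A) forces $R_0(\st) \equiv \FALSE$. For $\st = \st_0$, observe that $\varphi_0$ appears as a disjunct of $\predtrans(\cdot)(\st_0)$ for \emph{any} predicate argument; hence $\models \varphi_0 \implies \predtrans(R_i \land R)(\st_0)$ in the $\INDUCTION$ case and $\models \varphi_0 \implies \predtrans(R_i)(\st_0)$ in the $\CONFLICT$ case, and the rule's guard in each case then delivers $R(\st_0)$. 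This closes the boundary case and completes the case analysis.
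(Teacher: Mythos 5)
Your proposal is correct and follows essentially the same route as the paper's proof: a rule-by-rule case analysis in which $\INDUCTION$ and $\CONFLICT$ are the substantive cases, resolved by the same two auxiliary facts the paper isolates as lemmas (monotonicity of $\predtrans$ in its frame argument, and the observation that $\varphi_0$ is a disjunct of $\predtrans(\cdot)(\st_0)$ for any argument, which handles the $j=0$ boundary of (Con-B)/(Con-D)). The paper likewise discharges $\CONFLICT$ by reducing its guard to the $\INDUCTION$ side condition via monotonicity, so there is no material difference in approach.
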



\begin{theorem}
  \label{th:vanillaSoundness}
  If the GPDR procedure is started from the rule $\INITIALIZE$ and leads to $\RESVALID$, then the system is safe.
  If the GPDR procedure is started from the rule $\INITIALIZE$ and leads to $\RESMODEL \tuple{\valuation_0,\st_0,0}\dots\tuple{\valuation_N,\st_N,N}$, then the system is unsafe.
\end{theorem}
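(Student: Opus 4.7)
My plan is to leverage Lemma~\ref{lem:invariant}, which guarantees that every configuration reachable from $\INITIALIZE$ is consistent in the sense of Definition~\ref{def:originalCon}. I first check that the configuration produced by $\INITIALIZE$ itself is consistent: inspecting (Con-A) through (Con-F) with $N = 0$, $\cetrace = \emptyset$, and $R_0 = \predtrans(\lambda \st. \FALSE)$ (which is equivalent to $\fml_0$ at $\st_0$ and $\FALSE$ elsewhere) is routine. Given this, Lemma~\ref{lem:invariant} ensures the entire execution stays inside consistent configurations, so the two claims of Theorem~\ref{th:vanillaSoundness} reduce to analyzing the configurations from which $\VALID$ and $\MODEL$ can fire.

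For the safety claim, suppose $\VALID$ produces $\RESVALID$ from a consistent configuration $\PDRState{\cetrace}{R_0,\ldots,R_N;N}$ for which there exists $1 \le i < N$ with $\models R_i(\st) \implies R_{i-1}(\st)$ for every $\st \in \states$. Combined with (Con-B) this yields $R_{i-1} \equiv R_i$ pointwise, and I claim $R_{i-1}$ is an inductive invariant in the sense of Definition~\ref{def:inductiveInvariants}. Clause (1) follows from (Con-A) together with a chain of (Con-B); clause (2) uses Lemma~\ref{lem:predtransProp} (any jump transition from $\valuation \models R_{i-1}(\st)$ lands in $\predtrans(R_{i-1})(\st')$) and then (Con-D) to reach $R_i(\st') \equiv R_{i-1}(\st')$; clause (3) is (Con-C) at $i-1 < N$. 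A straightforward induction on the length of a run of $\dtsts$ then shows every reachable valuation $\valuation$ at location $\st$ satisfies $R_{i-1}(\st)$, hence $\varphi_P$, so the system is safe.

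For the unsafety claim, suppose $\MODEL$ fires on a consistent configuration $\PDRState{\tuple{\valuation_0,\st_0,0}\cetrace}{A}$ and returns $\RESMODEL\tuple{\valuation_0,\st_0,0}\dots\tuple{\valuation_N,\st_N,N}$. Applying (Con-F) to each adjacent pair gives $\tuple{\st^k,\valuation_k} \RED{\trans} \tuple{\st^{k+1},\valuation_{k+1}}$ for $0 \le k < N$; (Con-F) on the first pair (or (Con-E) in the degenerate case $N=0$) combined with (Con-A) gives $\valuation_0 \models \fml_0$; and (Con-E) yields $\valuation_N \models \neg\varphi_P$. Hence $\tuple{\st^0,\valuation_0}\dots\tuple{\st^N,\valuation_N}$ is a run of $\dtsts$ whose terminal valuation falsifies $\varphi_P$, i.e., a counterexample, so $\dtsts$ is unsafe.

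The most delicate step, and the one I expect to require the most care, is pinning down the correct frame to use as the inductive invariant and verifying the fixed-point identity $R_{i-1} \equiv R_i$, together with the boundary cases (particularly $i = 1$, and $N = 0$ in the counterexample branch where $\cetrace$ contains only the initial triple). Everything else is a routine appeal to Lemma~\ref{lem:invariant} and Lemma~\ref{lem:predtransProp}.
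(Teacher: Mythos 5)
Your proposal is correct and follows essentially the same route as the paper: establish consistency of every reachable configuration via Lemma~\ref{lem:invariant}, then in the $\VALID$ case extract the fixed-point frame $R_{i-1}$ (using (Con-A), (Con-B), (Con-C), (Con-D), the $\VALID$ side condition, and Lemma~\ref{lem:predtransProp}) as an inductive invariant, and in the $\MODEL$ case read off a counterexample run from (Con-A), (Con-E), and (Con-F). Your explicit appeal to Definition~\ref{def:inductiveInvariants} and the induction on run length merely spell out what the paper compresses into the phrase ``$R_{i-1}$ is a fixed point that proves unreachability of $\neg\varphi_P$.''
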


\section{$\HYBRIDPDR$}
\label{sec:hybridpdr}

We now present our procedure $\HYBRIDPDR$ that is an adaptation of the original GPDR to hybrid systems.
An adaptation of GPDR to hybrid systems requires the following two challenges to be addressed.
\begin{enumerate}
    \item\label{challenge1}
    The original definition of $\predtrans$ (Definition~\ref{def:predtransOrig}) captures only a discrete-time transition.
    In our extension of GPDR, we need a forward predicate transformer that can mention a flow transition.
    \item\label{challenge2}
    A run of an HA (Definition~\ref{def:harun}) differs from that of DTSTS in that its last transition consists only of flow dynamics; see Remark~\ref{remark:lasttransofha}.
\end{enumerate}
In order to address the first challenge, we extend the logic on which $\predtrans$ is defined to be able to mention flow dynamics and define $\predtrans$ on the extended logic (Section~\ref{sec:hybridpredtrans}).
To address the second challenge, we extend the configuration used by GPDR so that it carries an overapproximation of the states that are reachable from the last frame by a flow transition; the GPDR procedure is also extended to maintain this information correctly (Section~\ref{sec:extendedprocedure}).

\subsection{Extension of forward predicate transformer}
\label{sec:hybridpredtrans}

In order to extend $\predtrans$ to accommodate flow dynamics, we extend the logic on which $\predtrans$ is defined with \emph{continuous reachability predicates (CRP)} inspired by the differential dynamic logic ($\DL$) proposed by Platzer~\cite{DBLP:journals/ki/Platzer10}.
\begin{definition}
  \label{def:contReachPred}
  Let $\ODE$ be an ODE over $Y := \set{y_1,\dots,y_k} \subseteq \vars$.
  Let us write $\valuation$ for $\set{y_1 \mapsto e_1, \dots, y_k \mapsto e_k}$ and $\valuation'$ for $\set{y_1 \mapsto e_1', \dots, y_k \mapsto e_k'}$.
  We define a predicate $\CONTIREACHPRED{\ODE}{\varphi}\varphi'$ by:
  $\valuation \models \CONTIREACHPRED{\ODE}{\varphi}\varphi' \mathrm{\quad iff. \quad} \exists \valuation'. \valuation \CONTIREACH{\ODE}{\varphi} \valuation' \land \valuation' \models \varphi'$.
  We call a predicate of the form $\CONTIREACHPRED{\ODE}{\varphi_I}\varphi$ a \emph{continuous reachability predicate (CRP)}.
\end{definition}

%

Using the above predicate, we extend $\predtrans$ as follows.

\begin{definition}
\label{def:hybridpredtrans}
For an HA $\tuple{\states,\st_0,\fml_0,\flow,\inv,\trans}$, the forward predicate transformer $\predtranshybrid(R)(\st')$ is the following formula:
\[
    \begin{array}{ll}
      (\st'=\st_0 \land \varphi_0) \lor\\
      \displaystyle\bigvee_{(\st,\varphi,\cmd,\st') \in \trans} \exists \vec{x''}.\left(
        \begin{array}{ll}
          & [\vec{x''}/\vec{x}]R(\st)\\
          \land & \CONTIREACHPRED{[\vec{x''}/\vec{x}]\flow(\st)}{[\vec{x''}/\vec{x}]\inv(\st)}([\vec{x''}/\vec{x}]\varphi \land [\vec{x}/\vec{x'},\vec{x''}/\vec{x}]\cmd)
        \end{array}
        \right).
    \end{array}
\]
In the above definition, $[\vec{x''}/\vec{x}]\flow(\st)$ is the ODE obtained by renaming the variables $\vec{x}$ that occur in ODE $\flow(\st)$ with $\vec{x''}$.

We also define predicate $\predtranscont(R)(\st')$ as follows:
\[
\begin{array}{l}
  \exists \vec{x''}. ([\vec{x''}/\vec{x}]R(\st') \land \CONTIREACHPRED{[\vec{x''}/\vec{x}]\flow(\st')}{[\vec{x''}/\vec{x}]\inv(\st')} \vec{x} = \vec{x''}).
\end{array}
\]
\end{definition}
Intuitively, $\valuation' \models \predtranshybrid(\fml)(\st')$ holds if either (1) $\tuple{\st',\valuation'}$ is an initial state or (2) it is reachable from $R$ by a flow transition followed a jump transition.
Similarly, $\valuation' \models \predtranscont(\rel)(\st')$ holds if $\valuation'$ is reachable in a flow transition (not followed by a jump transition) from a state denoted by $\rel(\st')$.
%
This definition of $\predtranshybrid$ is an extension of Definition~\ref{def:predtransOrig} in that it encodes the "flow-transition" part of the above intuition by the CRP.
In the case of $\predtranscont$, the postcondition part of the CRP is $\vec{x}=\vec{x''}$ because we do not need a jump transition in this case.

\begin{lemma}
  \label{lem:predtranshybridProp}
  If $\valuation_1 \models R(\st_1)$ and $\valuation_1 \CONTIREACH{\flow(q_1)}{\inv(q_1)} \valuation^I$ and $\tuple{\st_1,\valuation^I} \RED{\trans} \tuple{\st_2,\valuation_2}$, then $\valuation_2 \models \predtranshybrid(R)(\st_2)$.
\end{lemma}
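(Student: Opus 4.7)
The plan is to mirror the structure of the proof of Lemma~\ref{lem:predtransProp}, but with an extra step that witnesses the continuous reachability predicate. The statement has essentially the same shape as the discrete-case lemma: we exhibit the pre-flow state $\valuation_1$ as the witness for the existentially quantified $\vec{x''}$, and use the premises to verify the three conjuncts of the disjunct in $\predtranshybrid(R)(\st_2)$ corresponding to the transition $(\st_1,\varphi,\cmd,\st_2)$.

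First, I would unpack the discrete-jump hypothesis: from $\tuple{\st_1,\valuation^I} \RED{\trans} \tuple{\st_2,\valuation_2}$, Definition~\ref{def:dtsts} gives a tuple $(\st_1,\varphi,\cmd,\st_2) \in \trans$ with $\valuation^I \models \varphi$ and $\valuation^I \cup \valuation_2' \models \cmd$. Fix this disjunct of $\predtranshybrid(R)(\st_2)$; it suffices to verify that $\valuation_2$ satisfies it with $\vec{x''}$ instantiated to $\valuation_1(\vec{x})$.

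Next, I would check the three conjuncts in order. (a) The conjunct $[\vec{x''}/\vec{x}]R(\st_1)$ holds under the extended valuation $\valuation_2 \cup \valuation_1''$ because $\valuation_1 \models R(\st_1)$. (b) For the continuous reachability predicate $\CONTIREACHPRED{[\vec{x''}/\vec{x}]\flow(\st_1)}{[\vec{x''}/\vec{x}]\inv(\st_1)}([\vec{x''}/\vec{x}]\varphi \land [\vec{x}/\vec{x'},\vec{x''}/\vec{x}]\cmd)$, unfolding Definition~\ref{def:contReachPred} reduces the check to producing a valuation $\valuation'$ with the primed variables reaching some target state via the ODE $\flow(\st_1)$ (over the renamed $\vec{x''}$), while $\inv(\st_1)$ holds throughout. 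The witness is the flow $\valuation_1 \CONTIREACH{\flow(\st_1)}{\inv(\st_1)} \valuation^I$ supplied by the hypothesis, with the endpoint's $\vec{x''}$-valuation equal to $\valuation^I$. (c) It remains to check that at that endpoint the CRP's postcondition holds: $[\vec{x''}/\vec{x}]\varphi$ follows from $\valuation^I \models \varphi$, and $[\vec{x}/\vec{x'},\vec{x''}/\vec{x}]\cmd$, once we trace the renamings (pre-jump $\vec{x}$ becomes $\vec{x''}$, post-jump $\vec{x'}$ becomes $\vec{x}$, and $\vec{x}$ here is $\valuation_2$), is exactly $\valuation^I \cup \valuation_2' \models \cmd$.

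The main obstacle, if any, is bookkeeping the three-way renaming $\vec{x} \mapsto \vec{x''}$, $\vec{x'} \mapsto \vec{x}$ and making sure that the semantics of the CRP as defined in Definition~\ref{def:contReachPred} (where the ODE variables of the formula are read from the ambient valuation) matches up with what $\valuation_1 \CONTIREACH{\flow(\st_1)}{\inv(\st_1)} \valuation^I$ provides after relabelling the ODE variables through the prime substitution. A short lemma — essentially, that $\CONTIREACHPRED{\cdot}{\cdot}$ is invariant under renaming its variables consistently with the renaming of the ambient formula — handles this cleanly, after which the remainder is a direct unfolding of definitions in the same spirit as the proof of Lemma~\ref{lem:predtransProp}.
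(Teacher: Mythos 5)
Your proposal is correct and follows essentially the same route as the paper's proof: instantiate $\vec{x''}$ with $\valuation_1$, witness the CRP by the renamed flow $\valuation_1'' \CONTIREACH{[\vec{x''}/\vec{x}]\flow(\st_1)}{[\vec{x''}/\vec{x}]\inv(\st_1)} {\valuation^I}''$, and discharge the postcondition at the endpoint from $\valuation^I \models \varphi$ and $\valuation^I \cup \valuation_2' \models \cmd$. The renaming-invariance step you flag is exactly the step the paper asserts (without a separate lemma) when it passes from hypothesis (2) to its renamed form (9).
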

\begin{proof}
  \begin{sloppypar}
  Assume (1) $\valuation_1 \models R(\st_1)$, (2) $\valuation_1 \CONTIREACH{\flow(q_1)}{\inv(q_1)} \valuation^I$, and (3) $\tuple{\st_1,\valuation^I} \RED{\trans} \tuple{\st_2,\valuation_2}$.
  Then, by definition, (4) $(q_1,\varphi,\varphi_c,q_2) \in \delta$ and (5) $\sigma^I \models \varphi$ and (6) $\sigma^I \cup \sigma_2' \models \varphi_c$ for some $\varphi$ and $\varphi_c$.
  We show $\exists \vec{x''}.\left(
[\vec{x''}/\vec{x}]R(\st) \land \CONTIREACHPRED{[\vec{x''}/\vec{x}]\flow(\st)}{[\vec{x''}/\vec{x}]\inv(\st)}([\vec{x''}/\vec{x}]\varphi \land [\vec{x}/\vec{x'},\vec{x''}/\vec{x}]\cmd) \right)$.
  (5) implies (7) $\sigma^I \cup \sigma_2' \models \varphi$.
  (6) and (7) imply (8) ${\sigma^I}'' \cup \sigma_2' \models [\vec{x''}/\vec{x}]\varphi \land [\vec{x''}/\vec{x}]\varphi_c$.
  (2) implies (9) $\valuation_1'' \CONTIREACH{[\vec{x''}/\vec{x}]\flow(q_1)}{[\vec{x''}/\vec{x}]\inv(q_1)} {\valuation^I}''$.
  Therefore, from (8) and (9), we have (10) $\sigma_1'' \cup \sigma_2 \models \CONTIREACHPRED{[\vec{x''}/\vec{x}]\flow(q_1)}{[\vec{x''}/\vec{x}]\inv(q_1)}([\vec{x''}/\vec{x}]\varphi \land [\vec{x''}/\vec{x}, \vec{x}/\vec{x'}]\varphi_c)$.
  (Note that the variables in $\vec{x'}$ appear only in $\varphi_c$.)
  $\sigma_1'' \cup \sigma_2 \models [\vec{x''}/\vec{x}]R(\st_1)$ follows from (1); therefore, we have $\sigma_1'' \cup \sigma_2 \models [\vec{x''}/\vec{x}]R(\st_1) \land \CONTIREACHPRED{[\vec{x''}/\vec{x}]\flow(q_1)}{[\vec{x''}/\vec{x}]\inv(q_1)}([\vec{x''}/\vec{x}]\varphi \land [\vec{x''}/\vec{x}, \vec{x}/\vec{x'}]\varphi_c)$.
  This implies $\exists \vec{x''}.\left(
[\vec{x''}/\vec{x}]R(\st) \land \CONTIREACHPRED{[\vec{x''}/\vec{x}]\flow(\st)}{[\vec{x''}/\vec{x}]\inv(\st)}([\vec{x''}/\vec{x}]\varphi \land [\vec{x}/\vec{x'},\vec{x''}/\vec{x}]\cmd) \right)$ as required.
  \end{sloppypar}
\end{proof}

\begin{lemma}
  \label{lem:predtranscontProp}
  If $\valuation_1 \models R(\st_1)$ and $\valuation_1 \CONTIREACH{\flow(q_1)}{\inv(q_1)} \valuation_2$, then $\valuation_2 \models \predtranscont(R)(\st_1)$.
\end{lemma}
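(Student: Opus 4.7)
The plan is to mirror the structure of the proof of Lemma~\ref{lem:predtranshybridProp}, but without the jump-transition machinery, which simplifies things considerably. Since $\predtranscont(R)(\st_1)$ is an existential over $\vec{x''}$, the main task is to produce a witness for $\vec{x''}$ and then verify the two conjuncts inside the existential.

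First, I would take the witness $\vec{x''} := \valuation_1(\vec{x})$, i.e.\ extend the ambient valuation by setting $\sigma_1''(\var_i'') := \valuation_1(\var_i)$ for each $i$. The left conjunct $[\vec{x''}/\vec{x}]R(\st_1)$ then follows immediately from the assumption $\valuation_1 \models R(\st_1)$, because the substitution simply reroutes the semantics of $R(\st_1)$ to read from the $\vec{x''}$-entries, which by construction agree with $\valuation_1$.

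Second, for the continuous reachability predicate $\CONTIREACHPRED{[\vec{x''}/\vec{x}]\flow(\st_1)}{[\vec{x''}/\vec{x}]\inv(\st_1)} \vec{x} = \vec{x''}$, I would appeal to the hypothesis $\valuation_1 \CONTIREACH{\flow(\st_1)}{\inv(\st_1)} \valuation_2$. Renaming $\vec{x}$ to $\vec{x''}$ turns this into a flow in the $\vec{x''}$-variables from $\valuation_1$'s values to $\valuation_2$'s values; that is, $\valuation_1'' \CONTIREACH{[\vec{x''}/\vec{x}]\flow(\st_1)}{[\vec{x''}/\vec{x}]\inv(\st_1)} \valuation_2''$, where $\valuation_2''$ similarly denotes the primed shift of $\valuation_2$. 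The trailing postcondition $\vec{x} = \vec{x''}$ is then satisfied at the endpoint on the combined valuation $\valuation_2 \cup \valuation_2''$: each coordinate of $\vec{x}$ is read from $\valuation_2$ and each coordinate of $\vec{x''}$ is read from $\valuation_2''$, and these agree by construction of the primed shift.

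Combining these two facts gives a single valuation satisfying $\exists \vec{x''}. \bigl([\vec{x''}/\vec{x}]R(\st_1) \land \CONTIREACHPRED{[\vec{x''}/\vec{x}]\flow(\st_1)}{[\vec{x''}/\vec{x}]\inv(\st_1)} \vec{x} = \vec{x''}\bigr)$ when restricted to $\vec{x}$, which is exactly $\valuation_2 \models \predtranscont(R)(\st_1)$. I do not expect any real obstacle here: the only thing to be careful about is that the flow transition $\CONTIREACH{\ODE}{\varphi}$ is defined on the $\vec{x}$-variables, so I must verify that renaming $\vec{x}$ uniformly to $\vec{x''}$ throughout $\ODE$ and $\varphi$ preserves the relation, which is immediate from Definition~\ref{def:contReachPred} since the relation is defined purely in terms of solutions of the ODE and the syntactic identity of the variables being flowed does not matter.
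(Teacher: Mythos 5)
Your proof is correct and follows essentially the same route as the paper: the paper's proof of this lemma simply defers to the argument for Lemma~\ref{lem:predtranshybridProp}, and your write-up is exactly that argument specialized to $\predtranscont$ --- witness $\vec{x''}$ by $\valuation_1$'s values, transport the hypothesis $\valuation_1 \CONTIREACH{\flow(\st_1)}{\inv(\st_1)} \valuation_2$ through the renaming to $\vec{x''}$, and check the postcondition $\vec{x}=\vec{x''}$ at the endpoint. No gaps.
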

\begin{proof}
  Almost the same argument as the proof of Lemma~\ref{lem:predtranshybridProp}.
\end{proof}

\subsection{Extension of GPDR}
\label{sec:extendedprocedure}

\begin{figure}[tp]
  \[
    \scriptsize
    \begin{array}{lrcl}
      \INITIALIZE & & \Lra & \PDRState{\emptyset}{\tuple{\rel_0 := \predtranshybrid(\lambda \st. \FALSE) ; \rel_{\rem} := \lambda\st.\TRUE; N := 0}} \\
                  &&&\IF \forall \st \in \states. \models \rel_0(\st) \implies \varphi_P\\
      \VALID & \PDRState{\cetrace}{\abstraction} & \Lra & \RESVALID\\
                  &&&\IF \exists i < N. \forall \st \in \states. \models \rel_{i}(\st) \implies \rel_{i-1}(\st)\\
      \UNFOLD & \PDRState{\cetrace}{\abstraction} & \Lra & \PDRState{\emptyset}{A[\rel_{N+1} := \lambda \st. \TRUE; \rel_{\rem} := \lambda\st.\TRUE; N := N + 1]}\\
                  &&& \IF \forall \st \in \states. \models \rel_{\rem}(\st) \implies \varphi_P\\
      \INDUCTION & \PDRState{\cetrace}{\abstraction} & \Lra & \PDRState{\emptyset}{A[\rel_j := \lambda \st. \rel_j(\st) \land \rel(\st)]_{j=1}^{i+1}}\\
                  &&&\begin{array}[t]{l}\IF 
                       \forall \st \in \states. \models \predtranshybrid(\lambda \st. \rel_i(\st) \land \rel(\st))(\st) \implies \rel(\st)\\
                     \end{array}\\
      \DECIDE & \PDRState{\tuple{\valuation_2,\st_2,i+1}\cetrace}{\abstraction} & \Lra & \PDRState{\tuple{\valuation_1,\st_1,i}\tuple{\valuation_2,\st_2,i+1}\cetrace}{A}\\
                  &&& \begin{array}[t]{l}
                        \IF
                        \tuple{\st_1,\varphi,\cmd,\st_2} \in \trans
                        \AND
                        \valuation_1,\valuation_2' \models \rel_i(\st_1) \land \CONTIREACHPRED{\flow(\st_1)}{\inv(\st_1)}(\varphi \land \cmd)\\
                      \end{array}\\
      \MODEL & \PDRState{\tuple{\valuation,\st_0,0}\cetrace}{A} & \Lra & \RESMODEL \tuple{\valuation,\st_0,0}\cetrace\\
      \CONFLICT & \PDRState{\tuple{\valuation',\st',i+1}\cetrace}{A} & \Lra & \PDRState{\emptyset}{A[\rel_j := \lambda \st. \rel_j(\st) \land \rel(\st)]_{j=1}^{i+1}}\\ 
                  &&& \IF \models \rel(\st') \implies \neg\valuation' \AND \forall \st \in \states. \models \predtranshybrid(R_i)(\st) \implies \rel(\st)\\
      \INDUCTIONCONT & \PDRState{\cetrace}{\abstraction} & \Lra & \PDRState{\cetrace}{A[\rel_\rem := \lambda \st. \rel_\rem(\st) \land \rel(\st)]}\\
                  &&& \begin{array}[t]{l}\IF 
                        \forall \st \in \states. \models \rel_N(\st) \lor \predtranscont(\rel_N)(\st) \implies \rel(\st)\\
                      \end{array}\\
      \CANDIDATECONT & \PDRState{\emptyset}{\abstraction} & \Lra & \PDRState{\tuple{\valuation,\st,\rem}}{\abstraction}\\
                  &&& \IF \valuation \models \rel_\rem(\st) \land \neg \varphi_P \\
      \DECIDECONT & \PDRState{\tuple{\valuation_2,\st,\rem}}{\abstraction} & \Lra & \PDRState{\tuple{\valuation_1,\st,N}\tuple{\valuation_2,\st,\rem}}{A}\\
                  &&& \IF \valuation_1,\valuation_2' \models \rel_N(\st) \land \CONTIREACHPRED{\flow(\st)}{\inv(\st)}(\vec{x}=\vec{x'})\\
      \CONFLICTCONT & \PDRState{\tuple{\valuation',\st',\rem}}{A} & \Lra & \PDRState{\emptyset}{A[\rel_\rem := \lambda \st. \rel_\rem(\st) \land \rel(\st)]}\\
                  &&& \IF \rel(\st') \implies \neg\valuation', \AND \models \rel_N(\st') \lor \predtranscont(\rel_N)(\st') \implies \rel(\st')
    \end{array}
  \]
  \caption{The rules for $\HYBRIDPDR$.}
  \label{fig:defHybridPDR}
\end{figure}

\begin{sloppypar}
We present our adaptation of GPDR for hybrid systems, which we call $\HYBRIDPDR$.
Recall that the original GPDR in Section~\ref{sec:vanillapdr} maintains a configuration of the form $\PDRState{\cetrace}{R_0,\dots,R_N; N}$.
$\HYBRIDPDR$ uses a configuration of the form $\PDRState{\cetrace}{R_0,\dots,R_N; R_{\rem}; N}$.
In addition to the information in the original configurations, we add $R_{\rem}$ which we call \emph{remainder frame}.
$R_{\rem}$ overapproximates the states that are reachable from $R_N$ within one flow transition.
\end{sloppypar}
  
Figure~\ref{fig:defHybridPDR} presents the rules for $\HYBRIDPDR$.
The rules from $\INITIALIZE$ to $\CONFLICT$ are the same as Figure~\ref{fig:defVanillaPDR} except that (1) $\INITIALIZE$ and $\UNFOLD$ are adapted so that they set the remainder frame to $\lambda\st.\TRUE$ and (2) $\CANDIDATE$ is dropped.
We explain the newly added rules.
\begin{itemize}
\item $\INDUCTIONCONT$ discovers a fact that holds in $\rel_{\rem}$.
  The side condition $\models \rel_N(\st) \lor \predtranscont(\rel_N)(\st) \implies \rel(\st)$ for any $\st$ guarantees that $\rel(\st)$ is true at the remainder frame; hence $R$ is conjoined to $\rel_{\rem}$.
\item
  $\CANDIDATECONT$ replaces $\CANDIDATE$ in the original procedure.
  It tries to find a candidate from the frame $\rel_{\rem}$.
  The candidate $\tuple{\st,\valuation}$ found here is added to $\cetrace$ in the form $\tuple{\valuation,\st,\rem}$ to denote that $\tuple{\st,\valuation}$ is found at $\rel_{\rem}$.
\item
  $\DECIDECONT$ propagates a counterexample $\tuple{\valuation',\st',\rem}$ found at $\rel_{\rem}$ to the previous frame $\rel_N$.
  This rule computes the candidate to be added to $\cetrace$ by deciding $\valuation \cup \valuation' \models \rel_N(\st) \land \CONTIREACHPRED{\flow(\st)}{\inv(\st)}(\vec{x}=\vec{x'})$, which guarantees that $\valuation$ evolves to $\valuation'$ under the flow dynamics determined by $\flow(\st)$ and $\inv(\st)$.
\item
  $\CONFLICT$ uses $\predtranshybrid$ instead of $\predtrans$ in the original GPDR.
  As in the rule $\CONFLICT$ in GPDR, the frame $R$ in this rule is a generalization of $\neg\valuation'$ which is not backward reachable to $\rel_i$.
  %
\item
  $\CONFLICTCONT$ is the counterpart of $\CONFLICT$ for the frame $\rel_{\rem}$.
  This rule is the same as $\CONFLICT$ except that it uses $\predtranscont$ instead of $\predtranshybrid$; hence, $R$ separates $\valuation'$ from both the states denoted by $\varphi_0$ and the states that are reachable from $\rel_i$ in a flow transition (\emph{not} followed by a jump transition).
  %
\end{itemize}

\subsection{Soundness}
\label{sec:extendedprocedureSoundness}

In order to prove the soundness of $\HYBRIDPDR$, we adapt the definition of $\CONSISTENT$ in Definition~\ref{def:originalCon} for $\HYBRIDPDR$.
\begin{definition}
  \label{def:hybridCon}
  Let $\hsts$ be $\tuple{\states,\st_0,\fml_0,\flow,\inv,\trans}$, $\predtranshybrid$ and $\predtranscont$ be the forward predicate transformers determined by $\hsts$, and $\varphi_P$ be the safety condition to be verified.
  A configuration $C$ is said to be \emph{consistent} if it is $\RESVALID$, $\RESMODEL \tuple{\valuation,\st_0,0}\cetrace$, or $\CONSISTENTH(\PDRState{\cetrace}{R_0,\dots,R_N; R_{\rem}; N})$ that satisfies all of the following:
  \begin{itemize}
  \item (Con-A) $R_0(\st_0) = \fml_0$ and $R_0(\st_i) = \FALSE$ if $\st_i \ne \st_0$;
  \item (Con-B-1) $\models R_i(\st) \implies R_{i+1}(\st)$ for any $\st$ and $i < N$;
  \item (Con-B-2) $\models R_N(\st) \implies R_{\rem}(\st)$ for any $\st$;
  \item (Con-C) $\models R_i(\st) \implies \varphi_P$ if $i < N$;
  \item (Con-D-1) $\models \predtranshybrid(R_i)(\st) \implies R_{i+1}(\st)$ for any $i < N$ and $\st$;
  \item (Con-D-2) $\models \predtranscont(R_N)(\st) \implies R_{\rem}(\st)$ for any $\st$;
  \item (Con-E) if $\tuple{\sigma,\st,\rem} \in \cetrace$, then $\sigma \models R_\rem(\st) \land \neg\varphi_P$;
  \item (Con-F-1) if $\tuple{\sigma_1,\st_1,i}, \tuple{\sigma_2,\st_2,i+1} \in \cetrace$ and $i < N$, then $\tuple{\st_1,\varphi,\cmd,\st_2} \in \trans$ and $\sigma_1,\sigma_2' \models R_i(\st_1) \land \varphi \land \varphi_c$; and
  \item (Con-F-2) if $\tuple{\sigma_1,\st_1,N}, \tuple{\sigma_2,\st_2,\rem} \in \cetrace$, then $\tuple{\st_1,\varphi,\cmd,\st_2} \in \trans$ and $\sigma_1,\sigma_2' \models R_i(\st_1) \land \varphi \land \varphi_c$.
  \end{itemize}
\end{definition}

The soundness proof follows the same strategy as that of the original GPDR.

\begin{lemma}
  \label{lem:hybrid-invariant}
  $\CONSISTENTH$ is invariant to any rule application of Figure~\ref{fig:defHybridPDR}.
\end{lemma}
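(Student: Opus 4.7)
The plan is to prove the invariance of $\CONSISTENTH$ by case analysis on the rewriting rule applied, checking for each rule that every one of the nine clauses (Con-A), (Con-B-1), (Con-B-2), (Con-C), (Con-D-1), (Con-D-2), (Con-E), (Con-F-1), and (Con-F-2) of Definition~\ref{def:hybridCon} is preserved. The terminal targets $\RESVALID$ and $\RESMODEL\,\cetrace$ are consistent by definition, handling $\VALID$ and $\MODEL$.

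Several rules reduce to direct verification. For $\INITIALIZE$, the second disjunct of Definition~\ref{def:hybridpredtrans} vanishes when its argument is $\lambda\st.\FALSE$, so $R_0$ coincides with $\lambda\st.\,(\st=\st_0\land\varphi_0)$, giving (Con-A); the remaining clauses are vacuous because $N=0$ and $\cetrace=\emptyset$. For $\UNFOLD$ the new frames are $\lambda\st.\TRUE$, so the (Con-B-*) and (Con-D-*) obligations at the new top trivialize, and (Con-C) at the former index $N$ follows from the side condition $R_\rem\implies\varphi_P$ together with the old (Con-B-2). For $\DECIDE$, $\CANDIDATECONT$, and $\DECIDECONT$ the side conditions literally instantiate the new (Con-E), (Con-F-1), or (Con-F-2) obligation carried by the added trace element.

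The substantive work lies in the frame-strengthening rules $\INDUCTION$, $\CONFLICT$, $\INDUCTIONCONT$, and $\CONFLICTCONT$. I would factor out two preliminary observations. First, monotonicity of $\predtranshybrid$ and $\predtranscont$ in their argument $R$ is immediate from Definition~\ref{def:hybridpredtrans} because $R$ occurs only positively. Second, the first disjunct $\st=\st_0\land\varphi_0$ of $\predtranshybrid(R)(\st)$ does not depend on $R$, so the side condition of $\INDUCTION$ forces $\varphi_0\implies\rel(\st_0)$ and hence $R_0\implies\rel$; combined with (Con-B-1) this yields $R_j\implies R_i\land\rel$ for every $j\le i$, after which monotonicity and the side condition propagate $\rel$ through the updated clause (Con-D-1) at every affected index, including the boundary $j=0$. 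The rule $\CONFLICT$ is handled by the same monotonicity argument applied directly to its side condition $\predtranshybrid(R_i)\implies\rel$. The continuous-time variants $\INDUCTIONCONT$ and $\CONFLICTCONT$ use $\predtranscont$ and the disjunctive side condition $R_N\lor\predtranscont(R_N)\implies\rel$, which simultaneously delivers the inductive step for (Con-D-2) and preserves (Con-B-2) when $\rel$ is conjoined into $R_\rem$.

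The step I expect to require the most care is preservation of (Con-E) under $\INDUCTIONCONT$, which, unlike the other strengthening rules, keeps $\cetrace$ intact while tightening $R_\rem$. For a preexisting entry $\tuple{\sigma,\st,\rem}$ the old (Con-E) provides only $\sigma\models R_\rem(\st)$, whereas the updated $R_\rem$ demands $\sigma\models\rel(\st)$, and the side condition $R_N\lor\predtranscont(R_N)\implies\rel$ is not directly applicable to $\sigma$. Closing this gap will require strengthening the invariant on $\rem$-indexed trace entries to track membership in $R_N\lor\predtranscont(R_N)$ (and correspondingly tightening the side condition of $\CANDIDATECONT$, which is the only rule that introduces such entries). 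With that calibration in place the case analysis closes uniformly.
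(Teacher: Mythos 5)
Your proposal follows essentially the same route as the paper's proof: a case analysis over the rules, with the monotonicity of $\predtranshybrid$ and $\predtranscont$ in their frame argument and the $R$-independence of the initial-state disjunct factored out as auxiliary lemmas, which is exactly how the paper organizes its argument. The one point where you genuinely diverge is the $\INDUCTIONCONT$/(Con-E) issue you raise at the end, and you are right that it is the delicate step: as written in Figure~\ref{fig:defHybridPDR}, $\INDUCTIONCONT$ retains $\cetrace$ while conjoining $\rel$ into $\rel_{\rem}$, and for a pre-existing entry $\tuple{\sigma,\st,\rem}$ the old (Con-E) yields only $\sigma \models \rel_{\rem}(\st) \land \neg\varphi_P$, not $\sigma \models \rel(\st)$, and the side condition $\models \rel_N(\st) \lor \predtranscont(\rel_N)(\st) \implies \rel(\st)$ cannot be applied to $\sigma$ without further information. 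The paper's proof disposes of this case by asserting that (Con-E) holds vacuously, which is justified only if $\INDUCTIONCONT$ clears $\cetrace$ the way $\INDUCTION$ does; the rule in the figure does not. Your repair (strengthening the invariant on $\rem$-indexed entries to record membership in $\rel_N \lor \predtranscont(\rel_N)$ and tightening $\CANDIDATECONT$ accordingly) would close the gap, but the cheaper fix, and evidently the intended one given the paper's proof text, is to have $\INDUCTIONCONT$ reset $\cetrace$ to $\emptyset$, after which the vacuity claim is legitimate and the rest of your case analysis goes through unchanged.
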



\begin{theorem}
  \label{th:hybrid-soundness}
  If $\HYBRIDPDR$ is started from the rule $\INITIALIZE$ and leads to $\VALID$, then the system is safe.
  %
  If $\HYBRIDPDR$ is started from the rule $\INITIALIZE$ and leads to $\RESMODEL \tuple{\valuation_0,\st_0,0}\dots\tuple{\valuation_N,\st_N,N}\tuple{\valuation_\rem,\st_\rem,\rem}$, then the system is unsafe.
\end{theorem}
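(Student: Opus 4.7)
My plan is to mirror the argument behind Theorem~\ref{th:vanillaSoundness}, reading both conclusions off the consistency clauses of Definition~\ref{def:hybridCon} once Lemma~\ref{lem:hybrid-invariant} is in place. Throughout, fix $\hsts = \tuple{\states,\st_0,\fml_0,\flow,\inv,\trans}$ and safety condition $\varphi_P$. Since $\INITIALIZE$ produces a consistent configuration and Lemma~\ref{lem:hybrid-invariant} preserves $\CONSISTENTH$ under every rule, whatever configuration $\HYBRIDPDR$ terminates in is itself consistent; both halves of the theorem then reduce to extracting semantic content from the appropriate consistency clauses.

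For the safety direction, suppose the procedure reaches $\RESVALID$. Only $\VALID$ can produce this outcome, so the immediately preceding configuration is some consistent $\PDRState{\cetrace}{R_0,\dots,R_N; R_{\rem}; N}$ together with an index $1 \le i < N$ such that $\models R_i(\st) \implies R_{i-1}(\st)$ for every $\st$; combining this with (Con-B-1) yields $R_i \equiv R_{i-1}$. I would then verify that $R_{i-1}$ is an inductive invariant in the sense of Definition~\ref{def:inductiveInvariants}. Clause (1), $\models \fml_0 \implies R_{i-1}(\st_0)$, follows from (Con-A) together with iterated (Con-B-1). Clause (2), closure under the flow-then-jump transitions of an HA, is immediate: whenever $\valuation \models R_{i-1}(\st)$, $\valuation \CONTIREACH{\flow(\st)}{\inv(\st)} \valuation^I$, and $\tuple{\st,\valuation^I} \RED{\trans} \tuple{\st',\valuation'}$, Lemma~\ref{lem:predtranshybridProp} gives $\valuation' \models \predtranshybrid(R_{i-1})(\st')$, whence (Con-D-1) and the fixpoint equality $R_i \equiv R_{i-1}$ place $\valuation'$ inside $R_{i-1}(\st')$. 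Clause (3), $\models R_{i-1}(\st) \implies \varphi_P$, follows from (Con-C) since $i-1 < N$. An induction on the number of jumps in a run as given by Definition~\ref{def:harun} then confirms that every valuation appearing before the final pure-flow step satisfies $R_{i-1}$ and hence $\varphi_P$; for the final flow step itself, (Con-B-1) cascades to $R_{i-1} \implies R_N$, and Lemma~\ref{lem:predtranscontProp} together with (Con-D-2) places the post-flow valuation into $R_{\rem}$, whose safety is inherited from the $\UNFOLD$ side condition that had to hold the last time the frame count was incremented to the current $N$.

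For the unsafety direction, suppose the procedure terminates in $\RESMODEL \tuple{\valuation_0,\st_0,0}\dots\tuple{\valuation_N,\st_N,N}\tuple{\valuation_\rem,\st_\rem,\rem}$. Only $\MODEL$ emits a $\RESMODEL$ result, so the predecessor is a consistent $\PDRState{\tuple{\valuation_0,\st_0,0}\cdots\tuple{\valuation_\rem,\st_\rem,\rem}}{A}$. Clause (Con-A) forces $\valuation_0 \models \fml_0$. For each $i < N$, clause (Con-F-1) supplies some $\tuple{\st_i,\fml,\cmd,\st_{i+1}} \in \trans$ witnessing $\tuple{\st_i,\valuation_i} \RED{\trans} \tuple{\st_{i+1},\valuation_{i+1}}$, and the guard/command constraints additionally force the intermediate flow required by Definition~\ref{def:harun} via the $\CONTIREACHPRED{\cdot}{\cdot}$ obligations propagated by $\DECIDE$. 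Clause (Con-F-2) analogously witnesses the final flow-only transition from $\tuple{\st_N,\valuation_N}$ to $\tuple{\st_\rem,\valuation_\rem}$, matching the last step of Definition~\ref{def:harun} as flagged in Remark~\ref{remark:lasttransofha}. Clause (Con-E) supplies $\valuation_\rem \models \neg\varphi_P$. Assembling these facts in the shape prescribed by Definition~\ref{def:harun} yields a concrete run of $\hsts$ whose terminal valuation violates $\varphi_P$, so $\hsts$ is unsafe.

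The main obstacle is the asymmetry introduced by the pure-flow final transition of an HA run: unlike the DTSTS case, the frame $R_{i-1}$ produced by $\VALID$ covers only the jump-reachable portion of the state space directly, so the safety argument must be closed through the interplay of $R_N$, $R_{\rem}$, (Con-B-2), (Con-D-2), Lemma~\ref{lem:predtranscontProp}, and the $\UNFOLD$ side condition. Making this bookkeeping precise, and dually checking that the level-$\rem$ tail of a counterexample trace really corresponds to the flow-only last transition of Definition~\ref{def:harun} rather than to a spurious jump, is where the proof diverges most from its DTSTS template and deserves the most care.
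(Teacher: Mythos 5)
Your proposal follows essentially the same route as the paper's proof: Lemma~\ref{lem:hybrid-invariant} plus induction on the execution length gives consistency of the penultimate configuration, the safety half reads an inductive invariant off (Con-A), (Con-B-1), (Con-C), (Con-D-1) and the $\VALID$ side condition via Lemma~\ref{lem:predtranshybridProp}, and the unsafety half assembles a run from (Con-A), (Con-F-1), (Con-F-2), (Con-E). The one place you go beyond the paper is the final pure-flow step of a run: the paper's proof stops at ``$R_{i-1}$ is a fixed point of $\predtranshybrid$'' and does not separately discharge safety of the valuations reached by the trailing flow transition of Definition~\ref{def:harun}, whereas you correctly identify this as the delicate point. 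However, your proposed closure there does not quite work as stated: the $\UNFOLD$ side condition constrains the \emph{old} $R_{\rem}$ before it is reset to $\lambda\st.\TRUE$, and $\VALID$ carries no side condition on the current $R_{\rem}$, so consistency alone does not give $\models R_{\rem}(\st) \implies \varphi_P$ at the moment $\VALID$ fires. This is a real wrinkle, but it is one the paper's own proof silently skips rather than resolves, so your attempt to confront it is a point in your favour even though the particular justification needs repair.
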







\subsection{Operational presentation of $\HYBRIDPDR$}
\label{sec:imperativeProc}

\newcommand\REMOVETRACE{\ensuremath{\mathit{RemoveTrace}}}
\begin{algorithm}[t]
  \SetKwFunction{KwRemoveCti}{$\REMOVETRACE$}
  \DontPrintSemicolon
  \KwIn{Hybrid automaton $\hsts := \tuple{\states,\st_0,\fml_0,\flow,\inv,\trans}$}
  \KwOut{$\RESMODEL(M)$ if $\hsts$ is unsafe; $M$ is a witnessing trace.  $\RESVALID(\rel)$ if $\hsts$ is safe; $\rel$ is an inductive invariant.}
  \tcp{$\INITIALIZE$}
  $N := 0$; $\rel_0 := \lambda\st. (\IF\ \st=\st_0\ \THEN\ \varphi_0\ \ELSE\ \FALSE)$ \;
  $\rel_1 := \TRUE$; $\rel_{\rem} := \TRUE$; $\cetrace := \emptyset$ \;
  \While{$\TRUE$}{
    \For{$\st \in \states$}{
      \Switch{$\QUERYSATAND(\rel_{\rem}(\st) \land \neg\varphi_P)$}{
        \uCase{$\SAT(\valuation')$}{
          \tcp{$\CANDIDATECONT$}
          $M := \tuple{\st,\valuation,\rem}$\;
          \Switch{\KwRemoveCti{$M$, $\rel_0,\dots,\rel_N,\rel_{\rem}$, N}}{
            \uCase{$\RESVALID(\rel)$}{
              \Return{$\RESVALID(\rel)$}
            }
            \uCase{$\RESCONT(\rel_0,\dots,\rel_N,\rel_{\rem})$}{
              $M := \emptyset$\;
              Update $\rel_0,\dots,\rel_N,\rel_{\rem}$ to the returned frames\;
            }
            \uCase{$\RESMODEL(M)$}{
              \Return{$\RESMODEL(M)$}
            }
          }
        }
        \uCase{$\UNSAT$}{
          \tcp{$\UNFOLD$}
          $M := \emptyset$; $\rel_{N+1}:=\lambda\st.\TRUE$; $\rel_{\rem}:=\lambda\st.\TRUE$; $N := N + 1$\;
        }
      }
    }
  }
  \caption{Definition of $\DETHYBRIDPDR$.}
  \label{proc:dethybridpdr}
\end{algorithm}

\begin{algorithm}[p]
  \scriptsize
  \SetKwFunction{KwRemoveCti}{$\REMOVETRACE$}
  \DontPrintSemicolon
  \KwIn{Hybrid automaton $\hsts := \tuple{\states,\st_0,\fml_0,\flow,\inv,\trans}$; Trace of counterexamples $M$; Frames $\rel_0,\dots,\rel_N,\rel_{\rem}$; Natural number $N$.}
  \KwOut{}
  \While{$M \ne \emptyset$}{
    \uIf{$M = \tuple{\st',\valuation',\rem}M'$}{
      \Switch{$\QUERYSATANDCONT(\rel_{N}(\st') \land \CONTIREACHPRED{\flow(\st')}{\inv(\st')}(\vec{x}=\valuation'(\vec{x}))$}{
        \tcp{$\DECIDECONT$}
        \uCase{$\SAT(\valuation)$}{
          $M := \tuple{\st',\valuation,N}M$
        }
        \tcp{$\CONFLICTCONT$}
        \uCase{$\UNSAT(R)$}{
          $M := \emptyset; \rel_{\rem} := \lambda \st. \rel_{\rem}(\st) \land R(\st)$\;
          \tcp{$\INDUCTIONCONT$}
          \For{$\psi \in \FORMULAS(\rel_N(\st'))$}{
            \Switch{$\QUERYSATANDCONT(\rel_N(\st') \land \CONTIREACHPRED{\flow(\st')}{\inv(\st')}\neg\psi)$}{
              \uCase{$\UNSAT$}{
                $\rel_{\rem}(\st') := \rel_{\rem}(\st') \land \psi$
              }
            }
          }
        }
      }
    }\uElseIf{$M = \tuple{\st',\valuation',0}M'$}{
      \tcp{$\MODEL$}
      \Return{$\RESMODEL(M)$}
    }\ElseIf{$M = \tuple{\st',\valuation',i}M'$ and $0 < i \ne \rem$}{
      \For{$\tuple{\st,\varphi,\varphi_c,\st'} \in \trans$}{
        \Switch{$\QUERYSATANDCONT(\rel_{i-1}(\st) \land \CONTIREACHPRED{\flow(\st)}{\inv(\st)}(\varphi\land\varphi_c\land \vec{x}=\valuation'(\vec{x})))$}{
          \tcp{$\DECIDE$}
          \uCase{$\SAT(\valuation)$}{
            $M := \tuple{\st,\valuation}M$
          }
          \tcp{$\CONFLICT$}
          \uCase{$\UNSAT(R)$}{
            \For{$j \in [1,i+1]$}{
              $\rel_j := \lambda \st. \rel_j(\st) \land R(\st); M := \emptyset$;
            }
          }
          \tcp{$\INDUCTION$}
          \For{$i \in [1,N-1], \psi \in \FORMULAS(\rel_i(\st'))$}{
            \Switch{$\QUERYSATANDCONT(\rel_i(\st') \land \psi \land \CONTIREACHPRED{\flow(\st')}{\inv(\st')}\neg\psi)$}{
              \uCase{$\UNSAT$}{
                $\rel_{j}(\st') := \rel_{j}(\st') \land \psi$ for $j \in [1,i+1]$
              }
            }
          }
        }
      }
    }
  }
  \uIf{There exists $i$ such that $\forall \st. \models \rel_{i+1}(\st) \implies \rel_{i}(\st)$}{
    \tcp{$\VALID$}
    \Return{$\RESVALID(\rel_{i})$}
  }\Else{
    \tcp{Inductive invariant is not reached yet.}
    \Return{$\RESCONT(\rel_0,\dots,\rel_N,\rel_{\rem})$}
  }
  \caption{Definition of $\REMOVETRACE$.}
  \label{proc:removetrace}
\end{algorithm}

The definition of $\HYBRIDPDR$ in Figure~\ref{fig:defHybridPDR} is declarative and nondeterministic.
For the sake of convenience of implementation, we derive an operational procedure from $\HYBRIDPDR$; we call the operational version $\DETHYBRIDPDR$, whose definition is in Algorithm~\ref{proc:dethybridpdr}.
%

\paragraph{Discharging verification conditions.}

\begin{algorithm}[t]
  \scriptsize
  \KwIn{Formula $\delta := \psi \land \CONTIREACHPRED{\ODE}{\varphi_I}(\wedge_{x \in \vars}x = \valuation'(x))$ to be discharged; Number $T > 0$.}
  \KwOut{$\SAT(\valuation_i)$ if $\valuation_i \models \delta$; $\UNSAT(\psi')$ if $\delta$ is unsatisfiable and $\psi'$ is a generalization of $\valuation'$; aborts if satisfiability nor unsatisfiability is proved.}
  \tcp{$\ODE^{-1}$ is the time-inverted ODE of $\ODE$.  Therefore, $p$ is the backward solution of $\ODE$ from $\valuation'$.}
  $\ODE^{-1}$ := the ODE obtained by replacing all the occurrences of the variable $t$ corresponding to the time to $-t$ and negating each time derivative\;
  Solve $\ODE^{-1}$ numerically from the initial point $\valuation_0 := \valuation'$\label{step:solveRevODE}\;
  Let $p := \valuation_0\valuation_1 \dots \valuation_{T-1}$ be the solution obtained at the Step~\ref{step:solveRevODE}\;
  \tcp{$i_1$ is set to $\infty$ if there is no such $i$.}
  $i_1$ := the minimum $i$ such that $\valuation_j \models \varphi_I$ for any $j < i$ and $\valuation_i \models \psi$\;
  \tcp{$i_2$ is set to $\infty$ if there is no such $i$.}
  $i_2$ := the minimum $i$ such that $\valuation_j \models \varphi_I$ for any $j < i$\;
  \uIf{$i_1 < \infty$}{
    \tcp{$\valuation_{i_1}$ witnesses the satisfiability of $\delta$.}
    \Return{$\SAT(\valuation_{i_1})$}
  }\ElseIf{$i_2 < \infty$}{
    \tcp{$\valuation_{i_2}$ is the end point of the $\ODE^{-1}$ with the stay condition $\varphi_I$, but $\valuation_{i_2} \not\models \psi$.  Therefore, $\psi$ is not backward reachable from $\valuation'$ along with $\ODE$.  Currently, the user needs to provide a predicate that can be used for further refinement.}
    Obtain $\psi'$ such that $\models \exists \vec{x_0}.[\vec{x_0}/\vec{x}]\psi \land \CONTIREACHPRED{[\vec{x_0}/\vec{x}]\ODE}{[\vec{x_0}/\vec{x}]\varphi_I}\vec{x_0}=\vec{x} \implies \psi'$ and $\valuation' \not\models \psi'$ from the user\label{step:askuser}\;
    \Return{$\UNSAT(\psi')$}
  }
  \tcp{Cannot conclude neither satisfiability nor unsatisfiabililty.}
  \textbf{abort}
  \caption{Algorithm for discharging $\delta := \psi \land \CONTIREACHPRED{\ODE}{\varphi_I}(\wedge_{x \in \vars}x = \valuation'(x))$.}
  \label{alg:dischargeCont}
\end{algorithm}

\begin{algorithm}
  \KwIn{Formula $\varphi_1 \land \CONTIREACHPRED{\vec{\dot{x}}=\vec{f}(\vec{x})}{\varphi_I}\neg\varphi_2$ to be discharged; Number $r > 0$.}
  \KwOut{$\UNSAT$ or $\OTHERWISE$; if $\UNSAT$ is returned then the input formula is unsatisfiable.}
  \If{$\varphi_1 \land \varphi_2$ is satisfiable}{\label{step:Satcheck}
    \Return{$\OTHERWISE$}
  }
  Let $\DT$ be a fresh symbol\;
  \tcp{Checking $\varphi_1$ is invariant throughout the dynamics determined by $\vec{\dot{x}}=\vec{f}(\vec{x})$ and $\models \varphi_1 \implies \neg\varphi_2$.}
  \If{$r>\DT>0 \land \varphi_1 \land \varphi_I \land \neg[\vec{x}+\vec{f}(\vec{x})\DT/\vec{x}]\varphi_1$ and $\varphi_1 \land \varphi_2$ are unsatisfiable}{\label{step:phi1nonexpansive}
    \Return{$\UNSAT$}
  }
    \tcp{Checking $\neg\varphi_2$ is invariant throughout in the dynamics determined by $\vec{\dot{x}}=\vec{f}(\vec{x})$ and $\models \varphi_1 \implies \neg\varphi_2$.}
  \If{$r>\DT>0 \land \neg\varphi_2 \land \varphi_I \land [\vec{x}+\vec{f}(\vec{x})\DT/\vec{x}]\varphi_2$ and $\varphi_1 \land \varphi_2$ are unsatisfiable}{\label{step:phi2nonexpansive}
    \Return{$\UNSAT$}
  }
  \Return{$\OTHERWISE$}
  \caption{Algorithm for discharging $\varphi_1 \land \CONTIREACHPRED{\vec{\dot{x}}=\vec{f}(\vec{x})}{\varphi_I}\neg\varphi_2$.}
  \label{alg:dischargeUnsat}
\end{algorithm}

An implementation of $\HYBRIDPDR$ needs to discharge verification conditions during verification.
In addition to verification conditions expressed as a satisfiability problem of a first-order predicate, which can be discharged by a standard SMT solver, $\DETHYBRIDPDR$ needs to discharge conditions including a CRP predicate.
Specifically, $\DETHYBRIDPDR$ needs to deal with the following three types of problems.
\todo{Check the explanation.}
\begin{itemize}
\item
  Checking whether $\delta := \psi \land \CONTIREACHPRED{\ODE}{\varphi_I}(\wedge_{x \in \vars}x = \valuation'(x))$ is satisfiable or not for given first-order predicates $\psi$ and $\varphi_I$, an ODE $\ODE$, and a valuation $\valuation'$.
  $\DETHYBRIDPDR$ needs to discharge this type of predicates when it decides which of $\DECIDECONT$ and $\CONFLICTCONT$ should be applied if the top of $\cetrace$ is $\tuple{\valuation',\st',\rem}$.
  We use Algorithm~\ref{alg:dischargeCont} for discharging $\delta$.
  This algorithm searches for a valuation $\valuation_i$ that witnesses the satisfiability of $\delta$ by using a time-inverted simulation of $\ODE$ as follows.
  Concretely, this algorithm numerically simulates $\ODE^{-1}$, the time-inverted ODE of $\ODE$, starting from the point $\set{\vec{x} \mapsto \valuation'(x)}$.
  If it reaches a point $\valuation_i$ that satisfies $\psi$ and if all $\valuation_{i+1} \dots \valuation'$ in the obtained solution satisfy $\varphi_I$, then $\valuation_i$ witnesses the satisfiability of $\delta$.
  If such $\valuation_i$ does not exist but there is $\valuation_i$ such that $\valuation_i \not\models \varphi_I$, then $\psi$ is not backward reachable from $\valuation'$ and hence $\delta$ is unsatisfiable.
  In this case, Algorithm~\ref{alg:dischargeCont} needs to return a predicate that can be used as $\psi'$ in the rule $\CONFLICTCONT$ in Figure~\ref{fig:defHybridPDR}.
  Currently, we assume that the user provides this predicate.
  We expect that we can help this step of discovering $\psi'$ by using techniques for analyzing continuous dynamics (e.g., automated synthesizer of barrier certificates~\cite{DBLP:conf/hybrid/PrajnaJ04} and Flow*~\cite{DBLP:conf/cav/ChenAS13} in combination with Craig interpolant synthesis procedures~\cite{DBLP:conf/aplas/OkudonoNKSKH17,DBLP:conf/cav/AlbarghouthiM13}).
  If neither holds, then we give up the verification by aborting; this may happen if, for example, the value of $T$ is too small.

\item
  Checking whether $\delta' := \psi \land \CONTIREACHPRED{\flow(\st)}{\inv(\st)}(\varphi\land\varphi_c\land \vec{x}=\valuation'(\vec{x}))$ is satisfiable or not.
  $\DETHYBRIDPDR$ needs to solve this problem in the choice between $\DECIDE$ and $\CONFLICT$.
  This query is different from the previous case in that the formula that appears after $\CONTIREACHPRED{\flow(\st)}{\inv(\st)}$ in $\delta'$ is $\varphi\land\varphi_c\land \vec{x}=\valuation'(\vec{x})$, not $\vec{x}=\valuation'(\vec{x})$; therefore, we cannot use numerical simulation to discharge $\delta'$.
  Although it is possible to adapt Algorithm~\ref{alg:dischargeCont} to maintain the sequence of \emph{predicates} $\alpha_0\alpha_1\dots\alpha_{T-1}$ instead of \emph{valuations} so that each $\alpha_i$ becomes the preimage of $\alpha_{i-1}$ by $\ODE$, the preimage computation at each step is prohibitively expensive.
  %
  Instead, the current implementation restricts the input system so that there exists at most one $\valuation$ such that $\valuation \models \varphi\land\varphi_c\land \vec{x}=\valuation'(\vec{x})$ for any $\valuation'$; if this is met, then one can safely use Algorithm~\ref{alg:dischargeCont} for discharging $\delta'$.
  Concretely, we allow only $\varphi_c$ that corresponds to the command whose syntax is given by $c ::= \SKIP \mid x := r_1 x + r_2 \mid x := r_1 x - r_2$ where $\SKIP$ is a command that does nothing; $r_1$ and $r_2$ are real constants.
\item
  \begin{sloppypar}
  Checking whether $\varphi_1 \land \CONTIREACHPRED{\ODE}{\varphi_I}\neg\varphi_2$ is unsatisfiable.
  $\DETHYBRIDPDR$ needs to discharge this type of queries when it applies $\INDUCTION$ or $\INDUCTIONCONT$.
  This case is different from the previous case in that (1) $\DETHYBRIDPDR$ may answer $\OTHERWISE$ without aborting the entire verification if unsatisfiability nor satisfiability is proved, and (2) $\DETHYBRIDPDR$ does not need to return a generalization if the given predicate is unsatisfiable.
  We use Algorithm~\ref{alg:dischargeUnsat} to discharge this type of queries.
  This algorithm first checks the satisfiability of $\varphi_1 \land \varphi_2$ in Step~\ref{step:Satcheck}; if it is satisfiable, then so is the entire formula.
  Then, Step~\ref{step:phi1nonexpansive} tries to prove that the entire formula is unsatisfiable by proving (1) $\varphi_1$ is invariant with respect to the dynamics specified by $\ODE$ and $\varphi_I$ and (2) $\varphi_1 \land \varphi_2$ is unsatisfiable.
  In order to prove the former, the algorithm tries the following sufficient condition: For any positive $\DT$ that is smaller than a positive real number $r$, $\models \varphi_i \land \varphi_I \implies [\vec{x}+\vec{f}(\vec{x})\DT/\vec{x}]\varphi_1$, where $\ODE \equiv \vec{\dot{x}} = \vec{f}(\vec{x})$.\footnote{This strategy is inspired by the previous work by one of the authors on nonstandard programming~\cite{DBLP:conf/aplas/NakamuraKSI17,DBLP:conf/popl/SuenagaSH13,DBLP:conf/cav/HasuoS12,DBLP:conf/icalp/SuenagaH11}.}
  Step~\ref{step:phi2nonexpansive} tries the same strategy but tries to prove that $\neg\varphi_2$ is invariant.
  If both attempts fail, then the algorithm returns $\OTHERWISE$.\footnote{If the flow specified by $\ODE$ is a linear or a polynomial, then we can apply the procedure proposed by Liu et al.~\cite{DBLP:conf/emsoft/LiuZZ11}, which is proved to be sound and complete for such a flow.}
  This algorithm could be further enhanced by incorporating automated invariant-synthesis procedures~\cite{DBLP:conf/cav/ColonSS03,DBLP:conf/popl/SankaranarayananSM04,DBLP:conf/emsoft/LiuZZ11}; exploration of this possibilities is left as future work.
  \end{sloppypar}
\end{itemize}

\section{Proof-of-concept implementation}
\label{sec:implementation}

We implemented $\DETHYBRIDPDR$ as a semi-automated verifier.
We note that the current implementation is intended to be a proof of concept; extensive experiments are left as future work.
The snapshot of the source code as of writing can be found at \url{https://github.com/ksuenaga/HybridPDR/tree/master/src}.

The verifier takes a hybrid automaton $\hsts$ specified with $\SPACEEX$ modeling language~\cite{spaceex}, the initial location $\st_0$, the initial condition $\varphi_0$, and the safety condition $\varphi_P$ as input; then, it applies $\DETHYBRIDPDR$ to discover an inductive invariant or a counterexample.
The frontend of the verifier is implemented with OCaml; in the backend, the verifier uses Z3~\cite{DBLP:conf/tacas/MouraB08} and ODEPACK~\cite{odepack} to discharge verification conditions.

As we mentioned in Section~\ref{sec:imperativeProc}, when a candidate counterexample $\tuple{\st',\valuation',i+1}$ turns out to be backward unreachable to $\rel_i$, then our verifier asks for a generalization of $\valuation'$ to the user; concretely, for example in an application of the rule $\CONFLICT$, the user is required to give $\psi$ such that $\models \psi \implies \neg\valuation' \AND \models (\st,\varphi,\varphi_c,\st') \in \delta \land [\vec{x_0}/\vec{x}]R_i(\st) \land \CONTIREACHPRED{[\vec{x_0}/\vec{x}]\flow(\st)}{[\vec{x_0}/\vec{x}]\inv(\st)}[\vec{x_0}/\vec{x}, \vec{x_0}/\vec{x}](\varphi \land \varphi_c) \implies \psi \AND \models \rel_0(\st') \implies \psi$.
Instead of throwing this query at the user in this form, the verifier asks the following question in order to make this process easier for the user for each $(\st,\varphi,\varphi_c,\st') \in \delta$:
\begin{lstlisting}[mathescape,columns=fullflexible,basicstyle=\ttfamily]
Pre:$\rel_i(\st)$; Flow:$\flow(\st)$; Stay:$\inv(\st)$; Guard:$\varphi$; Cmd:$\varphi_c$; CE:$\valuation'$; Init:$\rel_0(\st')$.
\end{lstlisting}
In applying $\CONFLICTCONT$, the verifier omits the fields \texttt{Guard} and \texttt{Cmd}.

We applied the verifier to the hybrid automaton in Figure~\ref{fig:exampleHA} with several initial conditions and the safety condition $\varphi_P := x \le 1$.
We remark that the outputs from the verifier presented here are post-processed for readability.
We explain how verification is conducted in each setting; we write $\ODE$ for the ODE $\dot{x} = -y, \dot{y} = x$.
\todo{Explain by figures.}
\begin{itemize}
\item
  Initial condition $x = 0 \land y = 0$ at location $\st_0$:
  The verifier finds the inductive invariant $\set{\st_0 \mapsto x = 0 \land y = 0, \st_1 \mapsto x = 0 \land y = 0}$ after asking for proofs of unsatisfiability to the user 5 times.
\item
  Initial condition $x \le \frac{1}{2}$ at location $\st_0$:
  The verifier finds a counterexample $\set{x \mapsto 0.490533,  y \mapsto 1.93995}$, from which the system reaches $\set{x \mapsto 2.00100, y \mapsto 0}$.
  The verifier asks 5 questions, one of which is the following:
  \begin{lstlisting}[mathescape,columns=fullflexible,basicstyle=\ttfamily]
    Pre: $(x \le 1 \land y \ge 0) \lor x \le 0.5$; Flow: $\ODE$; Stay: $y \ge 0$;
    Guard: $y \le 0$; Cmd: $\SKIP$; CE: $\set{x \mapsto 0.998516; y \mapsto -1.889365}$;
    Init: $x \le 0.5$.
  \end{lstlisting}
  Notice that the stay condition is $y \ge 0$ and the guard is $y \le 0$; therefore the predicate $y = 0$ holds when a jump transition happens.
  Since the flow specified by $\ODE$ is an anticlockwise circle whose center is $\set{x \mapsto 0, y \mapsto 0}$ with the stay condition $y \ge 0$, the states after the flow dynamics followed by a jump transition is $x \le 0.5 \land y = 0$, which indeed does not intersect with $x = 0.998516 \land y = -1.889365$.
  The verification proceeds by giving $y \ge 0$ as a generalization in this case.

\item
  Initial condition $0 \le x \le \frac{1}{2} \land 0 \le y \le \frac{1}{2}$ at location $\st_0$:
  The verifier finds an inductive invariant
  \[
    R :=
    \left\{
    \begin{array}{l}
      \st_0 \mapsto (y=0 \land 0 \le x \le 0.707107) \lor (0 \le x \le 0.5 \land 0 \le y \le 0.5),\\
      \st_1 \mapsto y=0 \land -0.707107 \le x \le 0
    \end{array}
    \right\}
  \]
  after asking for 8 generalizations to the user.
  This is indeed an inductive invariant.
  Noting $0.707107 \approx \frac{1}{\sqrt{2}}$, we can confirm that (1) the states that are reachable by flow dynamics followed by a jump transition is the set denoted by $R(\st_0)$; the same holds for the transition from $R(\st_1)$; (2) it contains the initial condition $0 \le x \le 0.5 \land 0 \le y \le 0.5$ at location $\st_0$; and (3) it does not intersect with the unsafe region $x > 1$.
  The following is one of the questions that are asked by the verifier:
  \begin{lstlisting}[mathescape,columns=fullflexible,basicstyle=\ttfamily]
    Pre: $(y = 0 \land -0.707107 \le x \le 0) \lor (0 \le x \le 0.5 \land 0 \le y \le 0.5)$;
    Flow: $\ODE$; Stay: $y \le 0$; CE: $\set{x \mapsto 0.998516; y \mapsto -1.889365}$;
    Init: $\FALSE$.
  \end{lstlisting}
  Instead of a precise overapproximation $(x^2 + y^2 = 0.5 \land y \le 0) \lor (0 \le x \le 0.5 \land 0 \le y \le 0.5)$ of the reachable states, we give $(-0.707107 \le y \le 0 \land -0.707107 \le x \le 0.707107) \lor (0 \le x \le 0.5 \land 0 \le y \le 0.5)$, which progresses the verification.
\end{itemize}

\section{Related work}
\label{sec:relatedwork}

Compared to its success in software verification~\cite{DBLP:conf/cav/BirgmeierBW14,DBLP:conf/sat/HoderB12,DBLP:conf/tacas/CimattiGMT14,DBLP:conf/cav/CimattiG12,DBLP:conf/cav/HoderBM11}, IC3/PDR for hybrid systems is less investigated.
HyComp~\cite{DBLP:conf/fmcad/CimattiGMT13,DBLP:conf/tacas/CimattiGMT15} is a model checker that can use several techniques (e.g., IC3, bounded model checking, and $k$-induction) in its backend.
Before verifying a hybrid system, HyComp discretizes its flows so that the verification can be conducted using existing SMT solvers that do not directly deal with continuous-time dynamics.
Compared to HyComp, $\HYBRIDPDR$ does not necessarily require prior discretization for verification.
We are not aware of an IC3/PDR-based model checking algorithm for hybrid systems that does not require prior discretization.

Kindermann et al.~\cite{DBLP:phd/basesearch/Kindermann14,DBLP:conf/formats/KindermannJN12} propose an application of PDR for a timed system---a system that is equipped with \emph{clock variables};
the flow dynamics of a clock variable $c$ is limited to $\dot{c}=1$.
A clock variable may be also reset to a constant in a jump transition.
Kindermann et al. finitely abstract the state space of clock variables by using region abstraction~\cite{DBLP:journals/sttt/Wang04}.
The abstracted system is then verified using the standard PDR procedure.
Later Isenberg et al.~\cite{DBLP:conf/icfem/0002W14} propose a method that abstracts clock variables by using zone abstraction~\cite{DBLP:journals/sttt/BehrmannBLP06}.
They do not deal with a hybrid system whose flow behavior at each location cannot be described by $\dot{c}=1$; the system in Figure~\ref{fig:exampleHA} is out of the scope of their work.


Our continuous-reachability predicates (CRP) are inspired by Platzer's $\DL$~\cite{DBLP:journals/ki/Platzer10}.
We may be able to use the theorem prover KeYmaera X for $\DL$ predicates~\cite{DBLP:conf/cade/FultonMQVP15} for our purpose of discharging CRP.







\section{Conclusion}
\label{sec:conclusion}

We proposed an adaptation of GPDR to hybrid systems.
For this adaptation, we extended the logic on which the forward predicate transformer is defined with the continuous reachability predicates $\CONTIREACHPRED{\ODE}{\varphi_I}\varphi$ inspired by the differential dynamic logic $\DL$.
The extended forward predicate transformer can precisely express the behavior of hybrid systems.
We formalized our procedure $\HYBRIDPDR$ and proved its soundness.
We also implemented it as a semi-automated procedure, which proves the safety of a simple hybrid system in Figure~\ref{fig:exampleHA}.

On top of the current proof-of-concept implementation, we plan to implement a GPDR-based model checker for hybrid systems.
We expect that we need to improve the heuristic used in the application of the rule $\INDUCTION$, where we currently check sufficient conditions of the verification condition.
We are also looking at automating part of the work currently done by human in verification; this is essential when we apply our method to a system with complex continuous-time dynamics.

\noindent\textbf{Acknowledgements.}
We appreciate the comments from the anonymous reviewers, John Toman, and Naoki Kobayashi.  This work is partially supported by JST PRESTO Grant Number JPMJPR15E5,  JSPS KAKENHI Grant Number 19H04084, and JST ERATO MMSD project.

\bibliographystyle{splncs04}
\bibliography{main}

\ifconffinal
\else
\appendix

\iffull
\section{Proof}

\subsection{Soundness of Vanilla PDR}

In the remainder of this section, we fix a DTSTS $\dtsts := \tuple{\states,\st_0,\fml_0,\trans}$; a predicate transformer $\predtrans$ determined by $\dtsts$; and a safety condition $\varphi_P$ to be verified.

\begin{definition}
  For a frame $R$, we write $\sem{R}_\st$ for
  \[
  \displaystyle\exists\st' \in \states. (\st = \st' \land R(\st')).
  \]
\end{definition}


\begin{lemma}
  \label{lem:vanilla-con-init}
  Let $R_0 := \predtrans(\lambda \st. \FALSE)$.  Then, $R_0(\st_0) =
  \fml_0$ and $R_0(\st_i) = \FALSE$ if $\st_i \ne \st_0$.
\end{lemma}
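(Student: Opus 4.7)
The proof is essentially a direct calculation from the definition of $\predtrans$ (Definition~\ref{def:predtransOrig}), so my plan is simply to unfold that definition and simplify.

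The plan is to instantiate the definition of $\predtrans(R)(\st')$ with $R = \lambda \st. \FALSE$. Doing so yields
\[
  \predtrans(\lambda \st.\FALSE)(\st') \;=\; (\st' = \st_0 \land \varphi_0) \;\lor\; \bigvee_{(\st,\varphi,\cmd,\st') \in \trans} \exists \vec{x''}.\, ([\vec{x''}/\vec{x}]\FALSE \land [\vec{x''}/\vec{x}]\varphi \land [\vec{x}/\vec{x'},\vec{x''}/\vec{x}]\cmd).
\]
Since $[\vec{x''}/\vec{x}]\FALSE$ is equivalent to $\FALSE$, every disjunct in the big disjunction over $\trans$ collapses to $\FALSE$. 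Hence the whole right-hand side simplifies to $(\st' = \st_0 \land \varphi_0)$.

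It then remains to do a case analysis on whether $\st'$ is the initial location. If $\st' = \st_0$, then $\st' = \st_0$ is equivalent to $\TRUE$, and the formula becomes $\varphi_0$, so $R_0(\st_0) = \varphi_0$. If $\st' = \st_i \ne \st_0$, then $\st' = \st_0$ is equivalent to $\FALSE$, so $R_0(\st_i) = \FALSE$.

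I do not expect any obstacles: the lemma is a sanity check that the initialization $R_0 := \predtrans(\lambda\st.\FALSE)$ used in the rule $\INITIALIZE$ matches condition (Con-A) of Definition~\ref{def:originalCon}. The only subtlety to be careful about is that the equalities in the statement should be read up to logical equivalence (as is customary in this paper), since the right-hand sides literally involve the syntactic forms $\TRUE \land \varphi_0$ and $\FALSE \land \varphi_0$ after simplifying $\st_i = \st_0$.
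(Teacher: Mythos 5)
Your proof is correct and matches the paper's: both unfold Definition~\ref{def:predtransOrig} at $R = \lambda\st.\FALSE$, observe that every disjunct over $\trans$ collapses because of the $[\vec{x''}/\vec{x}]\FALSE$ conjunct, leaving $\st' = \st_0 \land \varphi_0$, and then case-split on $\st'$. Your version just spells out the intermediate simplification in slightly more detail than the paper does.
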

\begin{proof}
  By the definition of $\predtrans$.  The frame $R_0$ is equivalent to
  $\lambda \st'. (\st' = \st_0 \land \varphi_0)$ by
  Definition~\ref{def:predtransOrig}.  Therefore, $R_0(\st_0) =
  \fml_0$ and $R_0(\st_i) = \FALSE$ if $\st_i \ne \st_0$ as required.
\end{proof}

\begin{lemma}
  \label{lem:vanilla-predtrans-strengthen}
  $\models \predtrans(R')(\st) \implies \predtrans(R)(\st)$ for any $R$ and $\st$ if $\models R'(\st) \implies R(\st)$ for any $\st$.
\end{lemma}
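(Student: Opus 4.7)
The plan is a straightforward monotonicity argument: since $R$ occurs only positively in the defining formula of $\predtrans(R)(\st)$, strengthening $R$ pointwise can only strengthen $\predtrans(R)(\st)$. Concretely, I would fix a valuation $\valuation$ and assume $\valuation \models \predtrans(R')(\st)$, then show $\valuation \models \predtrans(R)(\st)$ by case analysis on which disjunct of Definition~\ref{def:predtransOrig} is witnessed.

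In the first case, $\valuation \models \st = \st_0 \land \varphi_0$; this disjunct does not mention $R$ at all, so $\valuation \models \predtrans(R)(\st)$ holds immediately via the same disjunct. In the second case, there exist a transition $(\st_1,\varphi,\cmd,\st) \in \trans$ and a valuation extending $\valuation$ with values for $\vec{x''}$ that witnesses
\[
  [\vec{x''}/\vec{x}]R'(\st_1) \land [\vec{x''}/\vec{x}]\varphi \land [\vec{x}/\vec{x'},\vec{x''}/\vec{x}]\cmd.
\]
From the assumption $\models R'(\st_1) \implies R(\st_1)$, substituting $\vec{x''}$ for $\vec{x}$ yields $\models [\vec{x''}/\vec{x}]R'(\st_1) \implies [\vec{x''}/\vec{x}]R(\st_1)$. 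The same witness therefore satisfies $[\vec{x''}/\vec{x}]R(\st_1) \land [\vec{x''}/\vec{x}]\varphi \land [\vec{x}/\vec{x'},\vec{x''}/\vec{x}]\cmd$, which establishes the corresponding disjunct of $\predtrans(R)(\st)$.

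Since neither case is difficult, there is no real obstacle; the only mildly subtle point is to observe that substitution commutes with validity, i.e., that $\models \psi_1 \implies \psi_2$ implies $\models [\vec{x''}/\vec{x}]\psi_1 \implies [\vec{x''}/\vec{x}]\psi_2$ for any renaming, so that the hypothesis on $R'$ and $R$ can be used under the substitution $[\vec{x''}/\vec{x}]$ that appears in $\predtrans$. Once that is noted, the proof reduces to tracking disjuncts, and should occupy only a few lines.
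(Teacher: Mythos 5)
Your proposal is correct and takes essentially the same route as the paper, which simply observes that $R$ occurs only in a position where $\predtrans(R)(\st')$ is monotonic in the strength of $R$; your version just spells out the disjunct-by-disjunct case analysis and the commutation of renaming with validity that the paper leaves implicit.
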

\begin{proof}
  Recall the definition of $\predtrans(R)(\st')$:
  \[
  \begin{array}{l}
    (\st' = \st_0 \land \varphi_0) \lor \displaystyle\bigvee_{ (\st,\varphi,\cmd,\st') \in \trans} \exists \vec{x''}.
    \left(
    \begin{array}{ll}
      & [\vec{x''}/\vec{x}]R(\st)\\
      \land & [\vec{x''}/\vec{x}]\varphi \land [\vec{x}/\vec{x'},\vec{x''}/\vec{x}]\cmd
    \end{array}
    \right).
  \end{array}
  \]
  In the above definition, $R(\st)$ appears in the position where the strength of $\predtrans(R)(\st')$ is monotonic with respect to the strength of $R(\st)$.
  Therefore, changing $R$ to a pointwise-stronger frame $R'$ strengthens the entire formula as required.
\end{proof}

\begin{lemma}
  \label{lem:predtrans-implies-something-implies-init-implies-it}
  If $\models \predtrans(R')(\st) \implies R(\st)$, then $\models \st = \st_0 \land \varphi_0 \implies R(\st)$.
\end{lemma}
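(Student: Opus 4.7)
The plan is to observe that the first disjunct of $\predtrans(R')(\st)$ is literally $(\st = \st_0 \land \varphi_0)$, so the target formula is built into the predicate transformer, and the lemma follows by a single transitivity step with the hypothesis.

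Concretely, I would first unfold the definition of $\predtrans(R')(\st)$ from Definition~\ref{def:predtransOrig}:
\[
\predtrans(R')(\st) \;\equiv\; (\st = \st_0 \land \varphi_0) \;\lor\; \bigvee_{(\st_1,\varphi,\cmd,\st) \in \trans} \exists \vec{x''}.\bigl([\vec{x''}/\vec{x}]R'(\st_1) \land [\vec{x''}/\vec{x}]\varphi \land [\vec{x}/\vec{x'},\vec{x''}/\vec{x}]\cmd\bigr).
\]
Since the disjunction is only weaker than any single disjunct, we immediately have $\models (\st = \st_0 \land \varphi_0) \implies \predtrans(R')(\st)$, independently of $R'$. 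Chaining this with the hypothesis $\models \predtrans(R')(\st) \implies R(\st)$ yields $\models (\st = \st_0 \land \varphi_0) \implies R(\st)$, as required.

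There is essentially no obstacle here; the lemma is really a structural observation, namely that $\predtrans$ is always pointwise stronger than the initial-state predicate. The only minor care is to make sure that the formula $\st = \st_0$ is read at the metalevel (as the syntactic location variable used inside the disjunctive form of $\predtrans$), so that the conclusion for $\st \ne \st_0$ is vacuously true and for $\st = \st_0$ reduces to $\varphi_0 \implies R(\st_0)$, which is exactly what the first disjunct of $\predtrans(R')(\st_0)$ entails via the hypothesis.
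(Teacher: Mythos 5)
Your proof is correct and matches the paper's (one-line) argument, which simply appeals to the definition of $\predtrans$: the first disjunct of $\predtrans(R')(\st)$ is exactly $\st = \st_0 \land \varphi_0$, so the claim follows by transitivity with the hypothesis. You have just spelled out the details the paper leaves implicit.
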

\begin{proof}
  From the definition of $\predtrans$.
\end{proof}


\begin{pfof}{Lemma~\ref{lem:invariant}}
  Case analysis on the rules in Figure~\ref{fig:defVanillaPDR}.
  \begin{description}
  \item[\INITIALIZE] The condition (Con-A) follows from
    Lemma~\ref{lem:vanilla-con-init}.  The condition (Con-C) follows from the side condition of $\INITIALIZE$.  The other conditions hold vacuously.
  \item[\VALID] The resulting configuration is consistent as required since it is $\RESVALID$.
  \item[\UNFOLD] Let $C := \PDRState{\cetrace}{R_0,\dots,R_N; N}$,
    $\CONSISTENT(C)$, and $\UNFOLD$ is applied to this configuration.
    Let the resulting configuration $C' :=
    \PDRState{\emptyset}{A[\rel_{N+1} := \lambda \st. \TRUE; N := N +
        1]}$.  From the side condition of $\UNFOLD$, we have $\forall
    \st \in \states. \models \rel_N(\st) \implies \varphi_P$.  We show
    $\CONSISTENT(C')$.
    \begin{itemize}
    \item (Con-A) holds since $R_0$ is not unchanged.
    \item For (Con-B), it is sufficient to show $\models \sem{R_N}_\st \implies \sem{R_{N+1}}_\st$ for any $\st \in \states$.  By definition, $\sem{R_{N+1}}_\st$ is equivalent to $\exists \st' \in \states. (\st = \st')$, which is true for any $\st \in \states$.
    \item (Con-C) holds from the above side condition.
    \item For (Con-D), it is sufficient to prove $\models \predtrans(R_N)(\st) \implies \sem{R_{N+1}}_\st$ for any $\st$, which holds since $\sem{R_{N+1}}_\st$ is equivalent to $\exists \st' \in \states. (\st = \st')$.
    \end{itemize}
    (Con-E) and (Con-F) hold vacuously.
  \item[\INDUCTION] Let $C := \PDRState{\cetrace}{R_0,\dots,R_N; N}$, $\CONSISTENT(C)$, and $\INDUCTION$ is applied to this configuration.
    Let $A$ be $R_0,\dots,R_N$.
    Then, the resulting configuration $C'$ is $\PDRState{\cetrace}{A[\rel_j := \lambda \st. \rel_j(\st) \land \rel(\st)]_{j=1}^{i+1}; N}$, where $\models \predtrans(\lambda \st. \rel_i(\st) \land \rel(\st))(\st) \implies \rel(\st)$ for any $\st \in \states$.
    We show $\CONSISTENT(C')$.
    \begin{itemize}
    \item (Con-A) holds since $R_0$ is unchanged.
    \item To prove (Con-B), fix $\st'' \in \states$ and $i' \in \set{0,\dots,N-1}$ arbitrarily.
      We prove $\models \sem{R_{i'}}_{\st''} \implies \sem{R_{i'+1}}_{\st''}$.
      If $i' > 0$, then (Con-B) immediately follows from (Con-B) for $C$.
      Only interesting case is $i' = 0$, in which we must show $\models \sem{R_0}_{\st''} \implies \sem{\lambda \st. R_1(\st) \land R(\st)}_{\st''}$.
      Since $\models \sem{R_0}_{\st''} \implies \sem{R_1}_{\st''}$ follows from the condition (Con-B) for $C$, we show $\models \sem{R_0}_{\st''} \implies R(\st'')$, which follows from Lemma~\ref{lem:predtrans-implies-something-implies-init-implies-it} and (Con-A) for $C$.
      %
    \item (Con-C) is trivial since $\INDUCTION$ strengthen each $\sem{R_i}_\st$. 
    \item
      To prove (Con-D), fix $j < N$ and $\st \in \states$ arbitrarily; we show $\models \predtrans(R_j')(\st) \implies \sem{R_{j+1}'}_\st$ for $R_j'$ and $R_{j+1}'$ in $C'$, where $R_k'$ is defined as follows:
      \[
        R_k' :=
        \left\{
        \begin{array}{ll}
          R_0 & (k = 0)\\
          \lambda \st. R_k(\st) \land R(\st) & (1 \le k \le i + 1)\\
          R_k & (i + 1 < k \le N).
        \end{array}
        \right.
      \]
      \begin{itemize}
      \item
        If $j > i + 1$, then (Con-D) for $C'$ follows from (Con-D) for $C$.
      \item If $j = i + 1$, then we are to prove $\models \predtrans(\lambda \st. R_{i+1}(\st) \land R(\st))(\st) \implies \sem{R_{i+2}}_\st$.
        This follows from (Con-D) for $C$ and Lemma~\ref{lem:vanilla-predtrans-strengthen}.
      \item
        If $1 \le j < i + 1$, then we are to prove $\models \predtrans(\lambda \st. R_j(\st) \land R(\st))(\st) \implies \sem{\lambda \st. R_{j+1}(\st) \land R(\st)}_\st$, which is equivalent to
        (1) $\models (\st = \st_0 \land \varphi_0) \implies R_{j+1}(\st) \land R(\st)$ and
        (2) $[\vec{x''}/\vec{x}]R_j(\st) \land [\vec{x''}/\vec{x}]R(\st) \land [\vec{x''}/\vec{x}]\varphi' \land [\vec{x}/\vec{x'},\vec{x''}/\vec{x}]\cmd \implies R_{j+1}(\st) \land R(\st)$ for any $(\st',\varphi',\varphi_c,\st) \in \delta$ and $\vec{x''}$.
        From (Con-D) for $C$, we already have
        $\models \st = \st_0 \land \varphi_0 \implies R_{j+1}(\st)$ and
        $\models [\vec{x''}/\vec{x}]R_j(\st') \land [\vec{x''}/\vec{x}]\varphi' \land [\vec{x}/\vec{x'}, \vec{x''}/\vec{x}]\varphi_c \implies R_{j+1}(\st)$ for any $(\st',\varphi',\varphi_c,\st) \in \delta$ and $\vec{x''}$.
        Therefore, it suffices to show that
        (1') $\models (\st = \st_0 \land \varphi_0) \implies R(\st)$ and
        (2') $\models [\vec{x''}/\vec{x}]R_j(\st) \land [\vec{x''}/\vec{x}]R(\st) \land [\vec{x''}/\vec{x}]\varphi' \land [\vec{x}/\vec{x'},\vec{x''}/\vec{x}]\cmd \implies R(\st)$ for any $(\st',\varphi',\varphi_c,\st) \in \delta$ and $\vec{x''}$.
        The combination of (1') and (2') is equivalent to $\models \predtrans(\lambda \st. R_j(\st) \land R(\st))(\st) \implies R(\st)$, which follows from the side condition of $\INDUCTION$, (Con-B) for $C$, and Lemma~\ref{lem:vanilla-predtrans-strengthen}.
      \item
        If $j = 0$, then we are to prove $\models \predtrans(R_0)(\st) \implies \sem{\lambda \st. R_1(\st) \land R(\st)}_\st$, which is equivalent to $\models \predtrans(R_0)(\st) \implies (R_1(\st) \land R(\st))$.
        By (Con-D) for $C$, it suffices to show $\models \predtrans(R_0)(\st) \implies R(\st)$.
        We show (1) $\models \predtrans(\lambda \st. R_0(\st) \land R(\st))(\st) \implies R(\st)$ and (2) $\models R_0(\st) \implies R_0(\st) \land R(\st)$; then (Con-D) for $C'$ follows from Lemma~\ref{lem:vanilla-predtrans-strengthen}.
        (1) follows from the side condition of $\INDUCTION$, (Con-B) for $C$, and from Lemma~\ref{lem:vanilla-predtrans-strengthen}.
        (2) follows from Lemma~\ref{lem:predtrans-implies-something-implies-init-implies-it}.
      \end{itemize}
    \end{itemize}
    (Con-E) and (Con-F) hold vacuously.
  \item[\CANDIDATE]
    (Con-A), (Con-B), (Con-C), and (Con-D) trivially hold because $A$ is unchanged.  (Con-E) is trivial.  (Con-F) holds vacuously.
  \item[\DECIDE]
    (Con-A), (Con-B), (Con-C), and (Con-D) trivially hold because $A$ is unchanged.  (Con-E) holds because the $N$-th element of $\cetrace$ is unchanged.  (Con-F) is trivial.
  \item[\MODEL]
    $\RESMODEL \tuple{\valuation,\st_0,0}\cetrace$ is consistent by definition.
  \item[\CONFLICT] $C = \PDRState{\tuple{\valuation',\st',i+1}\cetrace}{A}$ and $C' = \PDRState{\emptyset}{A[\rel_j \la \lambda \st. \rel_j(\st) \land \rel(\st)]_{j=1}^{i+1}}$ where $\models \rel(\st') \implies \neg\valuation'$ and $\forall \st \in \states. \models \predtrans(R_i)(\st) \implies \rel(\st)$.  From the condition $\forall \st \in \states. \models \predtrans(R_i)(\st) \implies \rel(\st)$ and Lemma~\ref{lem:vanilla-predtrans-strengthen}, we have $\forall \st \in \states. \models \predtrans(\lambda \st. R_i(\st) \land \rel(\st))(\st) \implies \rel(\st)$, which is the same as the side condition for $\INDUCTION$.  Notice that the rewriting to a configuration by $\CONFLICT$ is the same as that of $\INDUCTION$; therefore, the soundness for this case follows by the same reasoning as the case for $\INDUCTION$.
  \end{description}
\end{pfof}

\begin{pfof}{Theorem~\ref{th:vanillaSoundness}}
  Suppose that an execution of GPDR starts from $\INITIALIZE$ and ends at $\RESVALID$.
  By Lemma~\ref{lem:invariant} and mathematical induction on the length of the execution, the configuration $C$ just before it reaches $\RESVALID$ is consistent.
  Let $C$ be $\PDRState{\cetrace}{\abstraction}$; then, from the side condition of $\VALID$, there exists $i < N$ such that $\forall \st \in \states. \models \rel_{i}(\st) \implies \rel_{i-1}(\st)$.
  For such $R_{i-1}$, we have the following three facts:
  \begin{itemize}
  \item $\models R_0(\st) \implies R_{i-1}(\st)$ for any $\st \in \states$ from (Con-A) and (Con-B);
  \item $\models \predtrans(R_{i-1})(\st) \implies R_{i-1}(\st)$ for any $\st$ from (Con-B) and (Con-D);
  \item $\models R_{i-1}(\st) \implies \varphi_P$ for any $\st \in \states$ from (Con-C).
  \end{itemize}
  Therefore, $R_{i-1}$ is a fixed point that proves unreachability of $\neg\varphi_P$ from $R_0$ via $\predtrans$.
  This leads to the safety of the system since Lemma~\ref{lem:predtransProp} asserts that $\predtrans$ soundly approximates the dynamics of the DTSTS $\dtsts$.

  On the contrary, suppose an execution of GPDR leads to $\RESMODEL \tuple{\valuation_0,\st_0,0}\dots\tuple{\valuation_N,\st_N,N}$ from $\INITIALIZE$.
  Let the configuration one step before the final one be $C$.
  By the same discussion as the previous case, $C$ is consistent.
  We have the following facts about $\tuple{\valuation_0,\st_0,0}\dots\tuple{\valuation_N,\st_N,N}$:
  \begin{itemize}
  \item $\valuation_0 \models \varphi_0$ by (Con-A) and (Con-F);
  \item From (Con-F), for each $\tuple{\valuation^{(1)},\st^{(1)},i}$ and $\tuple{\valuation^{(2)},\st^{(2)},i+1}$, there is a transition from the former to the latter; and
  \item $\valuation_N \models \neg\varphi_P$ by (Con-E).
  \end{itemize}
  Therefore, $\tuple{\st_0,\valuation_0} \RED{\trans} \dots \RED{\trans} \tuple{\st_N,\valuation_N}$ is a valid trace of $\dtsts$, which witnesses that $\dtsts$ is unsafe.
\end{pfof}




\subsection{Soundness of $\HYBRIDPDR$}

In the remainder of this section, set $\hsts$ to $\tuple{\states,\st_0,\fml_0,\flow,\inv,\trans}$, $\predtranshybrid$ and $\predtranscont$ to the forward predicate transformers determined by $\hsts$, and $\varphi_P$ to a safety condition to be verified.




\begin{lemma}
  \label{lem:hybrid-con-init}
  Let $R_0 := \predtranshybrid(\lambda \st. \FALSE)$.  Then,
  $R_0(\st_0) = \fml_0$ and $R_0(\st_i) = \FALSE$ if
  $\st_i \ne \st_0$.
\end{lemma}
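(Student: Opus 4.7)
The plan is to unfold the definition of $\predtranshybrid$ with $R := \lambda \st.\FALSE$ and show the disjunction over transitions collapses, leaving only the initial-state disjunct. This mirrors exactly the argument for Lemma~\ref{lem:vanilla-con-init}, with $\predtranshybrid$ in place of $\predtrans$; the continuous-reachability predicate plays no role because the substituted precondition is already $\FALSE$.

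Concretely, first I would instantiate Definition~\ref{def:hybridpredtrans} with $R := \lambda \st.\FALSE$, obtaining
\[
  \predtranshybrid(\lambda \st.\FALSE)(\st') \;=\; (\st' = \st_0 \land \varphi_0) \lor \bigvee_{(\st,\varphi,\cmd,\st') \in \trans} \exists \vec{x''}.\bigl([\vec{x''}/\vec{x}]\FALSE \land \dots\bigr).
\]
Next I would observe that $[\vec{x''}/\vec{x}]\FALSE \equiv \FALSE$, so each disjunct indexed by a transition is equivalent to $\FALSE$ irrespective of the CRP conjunct that follows it. Hence the whole right-hand side simplifies to $\st' = \st_0 \land \varphi_0$.

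Finally I would specialize $\st'$: taking $\st' = \st_0$ yields $R_0(\st_0) \equiv \varphi_0$, and taking $\st' = \st_i$ with $\st_i \neq \st_0$ yields $R_0(\st_i) \equiv \FALSE$, as required.

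I do not expect any genuine obstacle here; the argument is purely a definitional unfolding. The only minor care needed is to remember that the syntactic substitution $[\vec{x''}/\vec{x}]$ commutes with the constant $\FALSE$ (which it clearly does since $\FALSE$ contains no free variables), so the CRP subformula $\CONTIREACHPRED{[\vec{x''}/\vec{x}]\flow(\st)}{[\vec{x''}/\vec{x}]\inv(\st)}(\cdots)$ is never reached as a live proof obligation.
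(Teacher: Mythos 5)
Your proof is correct and is essentially the paper's own argument: the paper likewise proves this by unfolding Definition~\ref{def:hybridpredtrans} with $R := \lambda\st.\FALSE$, noting that every transition-indexed disjunct collapses so that $R_0$ is equivalent to $\lambda\st'.(\st'=\st_0 \land \varphi_0)$, exactly as in Lemma~\ref{lem:vanilla-con-init}. Your extra observation that the CRP conjunct is never a live obligation is a fine (if implicit in the paper) detail.
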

\begin{proof}
  By the definition of $\predtranshybrid$.  The proof is the same argument as that of Lemma~\ref{lem:vanilla-con-init}.
\end{proof}

\begin{lemma}
  \label{lem:hybrid-predtrans-strengthen}
  $\models \predtranshybrid(R')(\st) \implies \predtranshybrid(R)(\st)$ for any $R$ and $\st$ if $\models R'(\st) \implies R(\st)$ for any $\st$.
\end{lemma}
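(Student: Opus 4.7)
The plan is to mirror the proof of Lemma~\ref{lem:vanilla-predtrans-strengthen}, observing that the extension of $\predtrans$ to $\predtranshybrid$ preserves the monotonicity structure in the argument $R$.

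First I would recall the definition of $\predtranshybrid(R)(\st')$ from Definition~\ref{def:hybridpredtrans} and note precisely where $R$ appears: the initial disjunct $(\st'=\st_0 \land \varphi_0)$ does not mention $R$ at all, and in each transition-indexed disjunct indexed by $(\st,\varphi,\cmd,\st') \in \trans$, the frame $R$ occurs exactly once, as the conjunct $[\vec{x''}/\vec{x}]R(\st)$. Crucially, $R$ does not appear inside the stay condition of the CRP, inside the flow $\flow(\st)$, or inside the postcondition $[\vec{x''}/\vec{x}]\varphi \land [\vec{x}/\vec{x'},\vec{x''}/\vec{x}]\cmd$; these depend only on the HA itself and on the fresh variables $\vec{x''}$. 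So the situation is structurally identical to the vanilla case: $R$ occurs only in a positive position.

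Next I would appeal to the elementary fact that syntactic substitution preserves implication: from the hypothesis $\models R'(\st) \implies R(\st)$ for every $\st$, we obtain $\models [\vec{x''}/\vec{x}]R'(\st) \implies [\vec{x''}/\vec{x}]R(\st)$. Plugging this pointwise strengthening into each disjunct of $\predtranshybrid(R')(\st')$, and keeping the initial disjunct unchanged (it does not depend on $R'$ or $R$), gives $\models \predtranshybrid(R')(\st') \implies \predtranshybrid(R)(\st')$ by propositional reasoning (monotonicity of $\land$ and $\bigvee$ and $\exists \vec{x''}$).

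There is no real obstacle here; the only thing to be slightly careful about is to verify that the CRP $\CONTIREACHPRED{[\vec{x''}/\vec{x}]\flow(\st)}{[\vec{x''}/\vec{x}]\inv(\st)}(\cdots)$ really has no hidden dependence on $R$ and so contributes the same truth value on both sides of the implication. Once this is noted, the entire argument reduces, as in the vanilla proof, to the observation that the formula is monotone in $R$.
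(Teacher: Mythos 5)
Your proposal is correct and matches the paper's own argument, which likewise reduces the claim to the observation that $R$ occurs only in a positive (monotone) position in the definition of $\predtranshybrid$ and then invokes the same reasoning as Lemma~\ref{lem:vanilla-predtrans-strengthen}. Your additional check that the CRP has no hidden dependence on $R$ is a sensible explicit verification of a point the paper leaves implicit, but it does not change the route.
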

\begin{proof}
  By the definition of $\predtranshybrid$.  The proof is the same argument as that of Lemma~\ref{lem:vanilla-predtrans-strengthen}.
\end{proof}

\begin{lemma}
  \label{lem:hybrid-predtrans-implies-something-implies-init-implies-it}
  If $\models \predtranshybrid(R')(\st) \implies R(\st)$, then $\models \st = \st_0 \land \varphi_0 \implies R(\st)$.
\end{lemma}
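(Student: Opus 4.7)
The plan is to exploit the fact that, by Definition~\ref{def:hybridpredtrans}, the formula $\predtranshybrid(R')(\st)$ contains the disjunct $(\st = \st_0 \land \varphi_0)$ which does not depend on the argument frame $R'$. Hence any valuation witnessing the initial-state predicate also witnesses $\predtranshybrid(R')(\st)$ regardless of how $R'$ is chosen, and the desired implication follows by chaining through the hypothesis.

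More concretely, the steps I would carry out are: first, assume $\models \predtranshybrid(R')(\st) \implies R(\st)$. Second, take an arbitrary valuation $\valuation$ with $\valuation \models \st = \st_0 \land \varphi_0$. Third, unfold Definition~\ref{def:hybridpredtrans} and observe that the leftmost disjunct $(\st = \st_0 \land \varphi_0)$ of $\predtranshybrid(R')(\st)$ is satisfied by $\valuation$; consequently $\valuation \models \predtranshybrid(R')(\st)$. Finally, invoke the assumed implication to obtain $\valuation \models R(\st)$. Since $\valuation$ was arbitrary, we conclude $\models \st = \st_0 \land \varphi_0 \implies R(\st)$.

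The argument is the hybrid analogue of the proof of Lemma~\ref{lem:predtrans-implies-something-implies-init-implies-it} and is equally immediate: the CRP-based second disjunct of $\predtranshybrid$, which encodes flow-then-jump behavior, plays no role here, because we only need the presence of the initial-state disjunct. There is no real obstacle; the lemma is essentially a direct unfolding of the definition and will be used later (together with Lemma~\ref{lem:hybrid-predtrans-strengthen}) to verify condition (Con-A) and the $R_0$ cases of (Con-B-1)/(Con-D-1) when proving the invariant Lemma~\ref{lem:hybrid-invariant}.
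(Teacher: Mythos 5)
Your proof is correct and follows exactly the same route as the paper's (one-line) proof, which simply appeals to the definition of $\predtranshybrid$: the initial-state disjunct $(\st = \st_0 \land \varphi_0)$ is independent of the argument frame, so the implication chains through the hypothesis. Your version just spells out the unfolding that the paper leaves implicit.
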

\begin{proof}
  From the definition of $\predtranshybrid$.
\end{proof}

\begin{pfof}{Lemma~\ref{lem:hybrid-invariant}}
  Case analysis on the rules in Figure~\ref{fig:defHybridPDR}.
  We omit the cases whose proof is almost the same as that of Lemma~\ref{lem:invariant}.
  \begin{description}
  \item[\UNFOLD] Let
    $C := \PDRState{\cetrace}{R_0,\dots,R_N,R_{\rem}; N}$,
    $\CONSISTENTH(C)$, and $\UNFOLD$ is applied to this configuration.
    Let the resulting configuration
    $C' := \PDRState{\emptyset}{A[\rel_{N+1} := \lambda \st. \TRUE,
      \rel_{\rem} := \lambda\st.\TRUE; N := N + 1]}$.  From the side
    condition of $\UNFOLD$, we have
    $\forall \st \in \states. \models \rel_{\rem}(\st) \implies
    \varphi_P$.  We show $\CONSISTENTH(C')$.
    \begin{itemize}
    \item For (Con-B), it is sufficient to show
      $\models \sem{R_N}_\st \implies \sem{R_{N+1}}_\st$ for any
      $\st \in \states$.  By definition, $\sem{R_{N+1}}_\st$ is
      equivalent to $\exists \st' \in \states. (\st = \st')$, which is
      true for any $\st \in \states$.
    \item (Con-C) holds from the above side condition and (Con-B-1)
      and (Con-B-2).
    \item For (Con-D-1) and (Con-D-2) are similar to the proof of Lemma~\ref{lem:invariant}.
    \end{itemize}
    (Con-E), (Con-F-1), and (Con-F-2) hold vacuously.
  \item[\CONFLICT] Same as the proof of Lemma~\ref{lem:invariant}
    wherein we use Lemma~\ref{lem:hybrid-predtrans-strengthen} instead
    of Lemma~\ref{lem:vanilla-predtrans-strengthen}.
   
  \item[\INDUCTION]
    Let $C := \PDRState{\cetrace}{R_0,\dots,R_N,R_{\rem}; N}$, $\CONSISTENTH(C)$, and $\INDUCTION$ is applied to this configuration.
    Let $A$ be $R_0,\dots,R_N,R_{\rem}$.
    Then, the resulting configuration $C'$ is $\PDRState{\cetrace}{A[\rel_j := \lambda \st. \rel_j(\st) \land \rel(\st)]_{j=1}^{i+1}; N}$, where $\models \predtranshybrid(\lambda \st. \rel_i(\st) \land \rel(\st))(\st) \implies \rel(\st)$ for any $\st \in \states$.
    We show $\CONSISTENTH(C')$.
    \begin{itemize}
    \item (Con-A) holds since $R_0$ is unchanged.
    \item To prove (Con-B-1), fix $\st'' \in \states$ and $i' \in \set{0,\dots,N-1}$ arbitrarily.
      We prove $\models \sem{R_{i'}}_{\st''} \implies \sem{R_{i'+1}}_{\st''}$.
      If $i' > 0$, then (Con-B) immediately follows from (Con-B) for $C$.
      Only interesting case is $i' = 0$, in which we must show $\models \sem{R_0}_{\st''} \implies \sem{\lambda \st. R_1(\st) \land R(\st)}_{\st''}$.
      Since $\models \sem{R_0}_{\st''} \implies \sem{R_1}_{\st''}$ follows from the condition (Con-B) for $C$, we show $\models \sem{R_0}_{\st''} \implies R(\st'')$, which follows from Lemma~\ref{lem:hybrid-predtrans-implies-something-implies-init-implies-it} and (Con-A) for $C$.
      %
    \item (Con-B-2) trivially holds since $\INDUCTION$ does not change $R_{\rem}$.
    \item (Con-C) is trivial since $\INDUCTION$ strengthen each $\sem{R_i}_\st$. 
    \item
      To prove (Con-D-1), fix $j < N$ and $\st \in \states$ arbitrarily; we show $\models \predtranshybrid(R_j')(\st) \implies \sem{R_{j+1}'}_\st$ for $R_j'$ and $R_{j+1}'$ in $C'$, where $R_k'$ is defined as follows:
      \[
        R_k' :=
        \left\{
        \begin{array}{ll}
          R_0 & (k = 0)\\
          \lambda \st. R_k(\st) \land R(\st) & (1 \le k \le i + 1)\\
          R_k & (i + 1 < k \le N).
        \end{array}
        \right.
      \]
      \begin{itemize}
      \item
        If $j > i + 1$, then (Con-D-1) for $C'$ follows from (Con-D-1) for $C$.
      \item If $j = i + 1$, then we are to prove $\models \predtranshybrid(\lambda \st. R_{i+1}(\st) \land R(\st))(\st) \implies \sem{R_{i+2}}_\st$.
        This follows from (Con-D-1) for $C$ and Lemma~\ref{lem:hybrid-predtrans-strengthen}.
      \item
        If $1 \le j < i + 1$, then we are to prove $\models \predtranshybrid(\lambda \st. R_j(\st) \land R(\st))(\st) \implies \sem{\lambda \st. R_{j+1}(\st) \land R(\st)}_\st$, which is equivalent to
        (1) $\models (\st = \st_0 \land \varphi_0) \implies R_{j+1}(\st) \land R(\st)$ and
        (2) $[\vec{x''}/\vec{x}]R(\st) \land \CONTIREACHPRED{[\vec{x''}/\vec{x}]\flow(\st)}{[\vec{x''}/\vec{x}]\inv(\st)}([\vec{x''}/\vec{x}]\varphi \land [\vec{x}/\vec{x'},\vec{x''}/\vec{x}]\cmd \implies \sem{\lambda \st. R_{j+1}(\st) \land R(\st)}_\st$
        for any $(\st',\varphi',\varphi_c,\st) \in \delta$ and $\vec{x''}$.
        From (Con-D-1) for $C$, we already have
        $\models \st = \st_0 \land \varphi_0 \implies R_{j+1}(\st)$ and
        $[\vec{x''}/\vec{x}]R(\st) \land \CONTIREACHPRED{[\vec{x''}/\vec{x}]\flow(\st)}{[\vec{x''}/\vec{x}]\inv(\st)}([\vec{x''}/\vec{x}]\varphi \land [\vec{x}/\vec{x'},\vec{x''}/\vec{x}]\cmd \implies \sem{\lambda \st. R_{j+1}(\st)}_\st$ for any $(\st',\varphi',\varphi_c,\st) \in \delta$ and $\vec{x''}$.
        Therefore, it suffices to show that
        (1') $\models (\st = \st_0 \land \varphi_0) \implies R(\st)$ and
        (2') $[\vec{x''}/\vec{x}]R(\st) \land \CONTIREACHPRED{[\vec{x''}/\vec{x}]\flow(\st)}{[\vec{x''}/\vec{x}]\inv(\st)}([\vec{x''}/\vec{x}]\varphi \land [\vec{x}/\vec{x'},\vec{x''}/\vec{x}]\cmd \implies R(\st)$ for any $(\st',\varphi',\varphi_c,\st) \in \delta$ and $\vec{x''}$.
        The combination of (1') and (2') is equivalent to $\models \predtrans(\lambda \st. R_j(\st) \land R(\st))(\st) \implies R(\st)$, which follows from the side condition of $\INDUCTION$, (Con-B-1) for $C$, and Lemma~\ref{lem:hybrid-predtrans-strengthen}.
      \item
        If $j = 0$, then we are to prove $\models \predtranshybrid(R_0)(\st) \implies \sem{\lambda \st. R_1(\st) \land R(\st)}_\st$, which is equivalent to $\models \predtranshybrid(R_0)(\st) \implies (R_1(\st) \land R(\st))$.
        By (Con-B-1) for $C$, it suffices to show $\models \predtranshybrid(R_0)(\st) \implies R(\st)$.
        We show (1) $\models \predtranshybrid(\lambda \st. R_0(\st) \land R(\st))(\st) \implies R(\st)$ and (2) $\models R_0(\st) \implies R_0(\st) \land R(\st)$; then (Con-D-1) for $C'$ follows from Lemma~\ref{lem:hybrid-predtrans-strengthen}.
        (1) follows from the side condition of $\INDUCTION$, (Con-B-1) for $C$, and from Lemma~\ref{lem:hybrid-predtrans-strengthen}.
        (2) follows from Lemma~\ref{lem:hybrid-predtrans-implies-something-implies-init-implies-it}.
      \end{itemize}
    \end{itemize}
    (Con-E) and (Con-F) hold vacuously.
  \item[\INDUCTIONCONT]
    Let $C := \PDRState{\cetrace}{R_0,\dots,R_N,R_{\rem}; N}$, $\CONSISTENTH(C)$, and $\INDUCTIONCONT$ is applied to this configuration.
    Let $A$ be $R_0,\dots,R_N,R_{\rem}$.
    Then, the resulting configuration $C'$ is $\PDRState{\cetrace}{A[\rel_\rem := \lambda \st. \rel_\rem(\st) \land \rel(\st)]}$, where $\models \rel_N(\st) \lor \predtranscont(\rel_N)(\st) \implies \rel(\st)$ for any $\st \in \states$.
    We show $\CONSISTENTH(C')$.
    \begin{itemize}
    \item (Con-A) holds since $R_0$ is unchanged.
    \item (Con-B-1), (Con-C), and (Con-D-1) trivially hold since $\INDUCTIONCONT$ changes only $R_{\rem}$.
    \item To prove (Con-B-2), fix $\st'' \in \states$ arbitrarily.
      We are to prove $\models R_N(\st'') \implies R_{\rem}(\st'') \land R(\st'')$.
      From (Con-B-2), it suffices to show $\models R_N(\st'') \implies R(\st'')$, which follows from the side condition of $\INDUCTIONCONT$.
    \item
      To prove (Con-D-2), fix $\st \in \states$ arbitrarily.
      We are to show $\models \predtranscont(R_N)(\st) \implies R_{\rem}(\st) \land R(\st)$, which follows from (Con-D-2) for $C$ and the side condition for $\INDUCTIONCONT$.
    \end{itemize}
    (Con-E) and (Con-F) hold vacuously.
  \item[\CONFLICTCONT]
    The argument for this case is almost the same as that of $\CONFLICT$.
  \end{description}
\end{pfof}

\begin{pfof}{Theorem~\ref{th:hybrid-soundness}}
  Suppose that an execution of GPDR starts from $\INITIALIZE$ and ends at $\RESVALID$.
  By Lemma~\ref{lem:hybrid-invariant} and mathematical induction on the length of the execution, the configuration $C$ just before it reaches $\RESVALID$ is consistent.
  Let $C$ be $\PDRState{\cetrace}{\abstraction}$; then, from the side condition of $\VALID$, there exists $i < N$ such that $\forall \st \in \states. \models \rel_{i}(\st) \implies \rel_{i-1}(\st)$.
  By the same argument as the proof of Theorem~\ref{th:vanillaSoundness}, we can show that $R_{i-1}$ is a fixed point that proves unreachability of $\neg\varphi_P$ from $R_0$ via $\predtranshybrid$.
  This leads to the safety of the system since Lemma~\ref{lem:predtranshybridProp} asserts that $\predtranshybrid$ soundly approximates the dynamics of the HA $\hsts$.

  On the contrary, suppose an execution of GPDR leads to
  \[
    \RESMODEL \tuple{\valuation_0,\st_0,0}\dots\tuple{\valuation_N,\st_N,N}\tuple{\valuation_{\rem},\st_N,\rem}
  \]
  from $\INITIALIZE$.
  Let the configuration one step before the final one be $C$.
  By the same discussion as the previous case, $C$ is consistent.
  We have the following facts about $\tuple{\valuation_0,\st_0,0}\dots\tuple{\valuation_N,\st_N,N}\tuple{\valuation_{\rem},\st_N,\rem}$:
  \begin{itemize}
  \item $\valuation_0 \models \varphi_0$ by (Con-A) and (Con-F);
  \item From (Con-F-1), for each $\tuple{\valuation^{(1)},\st^{(1)},i}$ and $\tuple{\valuation^{(2)},\st^{(2)},i+1}$, there is a transition from the former to the latter;
  \item From (Con-F-2), for each $\tuple{\valuation^{(1)},\st^{(1)},N}$ and $\tuple{\valuation^{(2)},\st^{(2)},\rem}$, there is a continuous transition from the former to the latter; and
  \item $\valuation_N \models \neg\varphi_P$ by (Con-E).
  \end{itemize}
  Therefore, the run that consists of $\tuple{\st_0,\valuation_0} \dots \tuple{\st_N,\valuation_N} \tuple{\st_\rem,\valuation_\rem}$ is a valid trace of $\hsts$, which witnesses that $\hsts$ is unsafe.
\end{pfof}

\fi

\fi

\end{document}
